\theoremstyle{plain}
\newtheorem{thm}{Theorem}[section]
\newtheorem*{cor}{Corollary}
\newtheorem{lem}[thm]{Lemma}
\theoremstyle{definition}
\newtheorem{defn}{Definition}[section]
\theoremstyle{remark}
\newcommand{\eb}{\begin{enumerate}}
	\newcommand{\ee}{\end{enumerate}}
\newcommand{\ceil}[1]{\left\lceil #1 \right\rceil}
\newcommand{\floor}[1]{\left\lfloor #1 \right\rfloor}
\newcommand{\E}[1]{\mathbb{E}\left[#1\right]}
\newcommand{\EE}[2]{\mathbb{E}_{#1}\left[#2\right]}
\newcommand{\abs}[1]{\left| #1 \right|}
\newcommand{\Perp}{\perp\kern-4.5pt\perp}
\newcommand{\seq}{\ensuremath{\mathbf}}
\begin{document}
\title{Low-Rate, Low-Distortion Compression \\ with Wasserstein Distortion}

% %%% Single author, or several authors with same affiliation:
 \author{%
  \IEEEauthorblockN{Yang Qiu and Aaron B.~Wagner}
  \IEEEauthorblockA{School of Electrical and Computer Engineering\\
                     Cornell University\\
                     Ithaca, NY 14853 USA\\
                     Email: \{yq268,wagner\}@cornell.edu}
 }

%%% Several authors with up to three affiliations:
% \author{%
%   \IEEEauthorblockN{Yang Qiu and Aaron B.~Wagner}
%   \IEEEauthorblockA{School of Electrical and Computer Engineering\\
%                     Cornell University\\
%                     Ithaca, NY, USA\\
%                     Email: \{yq268,wagner\}@cornell.edu}
% }
%  \and
%  \IEEEauthorblockN{Claude E.~Shannon and David Slepian}
%  \IEEEauthorblockA{Bell Telephone Laboratories, Inc.\\ 
%                    Murray Hill, NJ, USA\\
%                    Email: \{csh, dsl\}@bell-labs.com}

%%% Many authors with many affiliations:
% \author{%
%   \IEEEauthorblockN{Andrew R.~Barron\IEEEauthorrefmark{1},
%                     Claude E.~Shannon\IEEEauthorrefmark{2},
%                     David Slepian\IEEEauthorrefmark{2},
%                     and Jacob Ziv\IEEEauthorrefmark{2}\IEEEauthorrefmark{3}}
%   \IEEEauthorblockA{\IEEEauthorrefmark{1}%
%                    Department of Statistics and Data Science, Yale University, New Haven, CT, USA,
%                     andrew.barron@yale.edu}
%   \IEEEauthorblockA{\IEEEauthorrefmark{2}%
%                     Bell Telephone Laboratories, Inc.,
%                     Murray Hill, NJ, USA,
%                     \{csh,dsl,jz\}@bell-labs.com}
%   \IEEEauthorblockA{\IEEEauthorrefmark{3}%
%                     Department of Electrical Engineering, Technion---Institute of Technology, Haifa, Israel,
%                     jz@ee.technion.ac.il}
% }

\maketitle

%%%%%%
%% Abstract: 
%% If your paper is eligible for the student paper award, please add
%% the comment "THIS PAPER IS ELIGIBLE FOR THE STUDENT PAPER
%% AWARD." as a first line in the abstract. 
%% For the final version of the accepted paper, please do not forget
%% to remove this comment!
%%

\begin{abstract}
%	Rate-distortion-realism tradeoff has been popular over the past few years. 
%    Classical rate-distortion theory aims to produce reconstructions that are close to source pixel-wise; recently, realism conditions, also known as perceptual quality, have also attracted attention, where the reconstructions seek to look `realistic' to human observers. 
	Wasserstein distortion is a one-parameter family of distortion 
    measures that was recently proposed to unify fidelity 
    and realism constraints. %bridge the gap between the two. 
    After establishing continuity results for Wasserstein in the extreme
    cases of pure fidelity and pure realism, we prove the first coding
    theorems for compression under Wasserstein distortion focusing on
    the regime in which both the rate and the distortion are small.
	%In this work, we study the theoretical properties for Wasserstein distortion. We first show that Wasserstein distortion reduces to fidelity constraint and realism constraint as special cases. We then study the low-rate scheme for Wasserstein distortion. In particular, we  provide two achievability schemes and a converse bound, in the event of rate going to zero.
\end{abstract}

\section{Introduction}

In classical rate-distortion theory, one seeks to represent each source
sequence with as few bits as possible, while producing reconstructions that are close to the source under some distortion metric. Standard metrics, such as PSNR, SSIM~\cite{wang2004image}, etc.~\cite{avcibas2002statistical,dosselmann2005existing,hore2010image}, result in reconstructions that preserve high sample-level fidelity to the source sequences. While these metrics have proven useful~\cite{berger1971rate,pearlman2011digital,sayood2017introduction} in image compression, the reconstructions produced under these metrics are prone to artifacts~\cite{wang2009mean}. Similar deficiencies are also seen in tasks such as image deblurring~\cite{nah2021ntire}, denoising~\cite{buades2005review}, 
and super-resolution~\cite{kwon2015efficient}.

Recently, a new type of constraint, namely \emph{realism}\footnote{Also known as \emph{perceptual quality} in some literature.}, has been proposed to combat such defects. Realism refers to a distribution-level distance between the source and reconstruction~\cite{blau2018perception} (see also~\cite{delp1991moment,li2011distribution,saldi2014randomized}). For instance, the distribution for the source (resp., reconstruction) could be the ensemble distribution of the source images (resp., reconstructed images); then minimizing the distributional distance would reduce blurriness and other artifacts, as a distribution over the space of crisp source images and another over the space of blurry images would have a sizable distributional distance. 

Realism constraints have been extensively studied in recent years, both experimentally~\cite{rippel2017real,tschannen2018deep,agustsson2019generative,mentzer2020high} and theoretically~\cite{klejsa2013multiple,blau2018perception,blau2019rethinking,matsumoto2018introducing,matsumoto2019rate,theis2021coding,chen2021distribution,chen2022rate,wagner2022rate,hamdi2023rate}. 
Past work has considered distributions induced from the source and 
reconstructions 
in various ways, such as by considering the distribution of a full-sized,
randomly chosen image~\cite{theis2021coding,theis2022lossy, wagner2022rate, chen2022rate, hamdi2023rate}; the distribution of a random patch from a randomly
selected image~\cite{agustsson2019generative}; or the distribution of
a random patch from a single image~\cite{wang2018esrgan,gao2021perceptual}.

Existing studies treat fidelity and realism as distinct constraints; some have even argued that they are in tension~\cite{blau2018perception,zhang2021universal,chen2022rate,niu2023conditional,salehkalaibar2023rate}, resulting in the \emph{distortion-perception tradeoff}. Yet the ultimate goal for both notions are identical: to quantify the differences between two images perceived by human observers. Thus a unified notion of distortion that simultaneously generalizes fidelity
and realism is desirable---such a generalization has the potential to
better capture human-perceived distortion between images than either
can alone. Recently, a new one-parameter family of distortion measures, \emph{Wasserstein distortion}~\cite{qiu2023wasserstein}, was proposed as a simultaneous
generalization of the two. 

Wasserstein distortion is inspired by models of the Human Visual System (HVS), namely the \emph{summary statistics} model proposed in~\cite{balas2009summary,rosenholtz2011your,rosenholtz2012summary}. The model is described in  detail in~\cite{freeman2011metamers}. In summary, the model describes how information is processed in the first two areas of the ventral stream. 
Specifically, it is assumed that the vision system computes statistics
of filter responses over different \emph{receptive fields}.
%The V1 neurons pass the current field of view through multiple oriented filters with different orientations and spatial frequencies; the V2 neurons then compute higher order statistics from the V1 outputs over various receptive fields. 
The receptive fields grow with distance from the fovea, or center of the
gaze, as depicted in Fig.~\ref{fig:v1}. In the visual periphery, the receptive fields are large and, as a result, statistical information is pooled
over a large area. In the fovea, the receptive field is small enough that the original image can be recovered from the statistics. 

Wasserstein distortion translates this model into a distortion measure.
A given image is covered with overlapping \emph{pooling regions} of potentially
 different sizes. Within each region, a distribution over filter responses is defined. The Wasserstein distortion between two images is then defined to be the spatially-averaged distance between the corresponding distributions between the two images. It is controlled by a width
parameter ($\sigma$ to follow) that controls the size of the pooling region:
when $\sigma$ is large, the statistics are pooled over large regions,
akin to a realism measure.
When $\sigma$ is small, the statistics are pooled over a small
region, so that Wasserstein distortion reduces to a conventional
pixel-level fidelity measure. Note that the $\sigma$ parameter
can vary spatially over the image, with some portions subject to a
fidelity constraint, others subject to a realism constraint, and others
subject to a constraint that is a fusion of the two.
\begin{figure}[ht]
    \centering
    \includegraphics[width=0.4\linewidth]{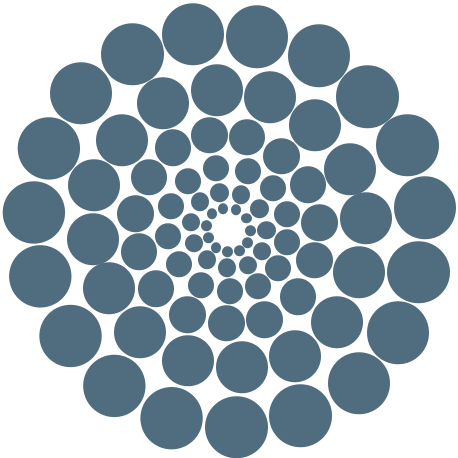}
    \caption[Receptive fields in the ventral stream grow with eccentricity.]{\protect\raggedright Receptive fields in the ventral stream grow with eccentricity.}
    \label{fig:v1}
\end{figure}

Wasserstein distortion was introduced and experimentally validated in~\cite{qiu2023wasserstein}. This work considers the metric from a theoretical
viewpoint. We first show that Wasserstein distortion reduces to 
fidelity and realism constraints in a continuous way as 
$\sigma \rightarrow 0$ and $\sigma \rightarrow \infty$, respectively.
We then consider coding theorems for i.i.d.\ sources under
Wasserstein distortion focusing on the large-$\sigma$ regime. 
For stationary ergodic sources, in the limit as $\sigma$ tends 
to infinity, zero distortion can be achieved with zero rate,
since the decoder can simply output an independent realization
of the source. We show that, under certain design choices for
Wasserstein distortion, as $\sigma \rightarrow \infty$,
the rate and distortion vanish as $(\frac{1}{\sigma^\alpha},
\frac{1}{\sigma^\beta})$ and partially characterize the optimal
tradeoff between $\alpha$ and $\beta$.

The balance of the paper is organized as follows. Section~\ref{sec:setup} consists of a self-contained description of Wasserstein distortion. Section~\ref{sec:extreme} proves that Wasserstein distortion continuously reduces to fidelity and realism constraints as $\sigma$ tend to zero and infinity, respectively. Section~\ref{sec:lowrate} provides the rate-distortion analysis for Wasserstein distortion in the large-$\sigma$ regime, where two achievable schemes and one converse argument are provided.

\section{Wasserstein distortion}
\label{sec:setup}

Let $\seq{X} = \{X_n\}_{n = -\infty}^\infty$ be a stochastic process that represents the source of interest, with realizations denoted by $\seq{x} = \{x_n\}_{n = -\infty}^\infty$. 

Let $T$ denote the unit advance operation, i.e., if $\seq{x}' = T \seq{x}$ then
\begin{equation}
    x_{n}' = x_{n+1}.
\end{equation}
We denote the $k$-fold composition $T \circ T \circ \ldots \circ T$ by $T^k$. We assume that $\seq{X}$ is strongly stationary, i.e., $T\seq{X} \overset{d}{=} \seq{X}$.%, and ergodic, i.e., for a set $A$ of sequences, if $T^{-1}(A) = T$ then $Pr(\seq{X} \in A) = 0$ or $1$. 

Let $\phi(\seq{x})$ denote a vector of local features of $\{x_n\}_{n = -\infty}^\infty$ about $n = 0$. Define the random variable $Z$ by $Z = \phi(\seq{X})$ and the process $\seq{Z}$ by
\begin{equation}
    Z_n = \phi(T^{n} \seq{X}).
\end{equation}
Then $\seq{Z}$ is also strongly stationary. The $\phi$ function models the lower-level ventral streams in the retina that capture features in the field of vision, as depicted above. $\phi$ can take many forms: the coordinate map,  convolution with multiple kernels, steerable pyramid~\cite{portilla2000parametric,freeman2011metamers}, convolution with random kernels followed by non-linearity~\cite{ustyuzhaninov2017does}, selected layers of a convolutional neural network~\cite{gatys2015texture}, etc.

Let $q_\sigma(k)$, $k \in \mathbb{Z}$, denote a family of probability mass functions
(PMFs)
over the integers, parameterized by $0 \leq \sigma < \infty$, satisfying~\cite{qiu2023wasserstein}:
\begin{enumerate}[label=\textbf{P.\arabic*}]
	\item \label{TSG:symm} For any $\sigma$ and $k$, $q_\sigma(k) = q_\sigma(-k)$; 
	\item \label{TSG:mono} For any $\sigma$ and $k,k' \in \mathbb{Z}$ such that $\abs{k} \leq \abs{k'}$, $q_\sigma(k) \geq q_\sigma(k')$;
	\item \label{TSG:delta} If $\sigma = 0$, $q_\sigma$ is the Kronecker delta function, i.e., 
	   $q_0(k) = \begin{cases} 1 & k = 0 \\ 0 & k \neq 0\end{cases}$;
	\item \label{TSG:cont} For all $k$, $q_\sigma(k)$ is continuous in $\sigma$
	    at $\sigma = 0$;
	\item \label{TSG:tail} There exists $\epsilon > 0$ and $K$ so that for all $k$ such that $|k| \ge K$,
	$q_\sigma(k)$ is nondecreasing in $\sigma$ over the range $[0,\epsilon]$; and
	\item \label{TSG:unif} For any $k$, $\lim_{\sigma \rightarrow \infty} q_{\sigma}(k) = 0$.
\end{enumerate}
We call $q_\sigma(\cdot)$ the \emph{pooling PMF} and $\sigma$ the \emph{pooling width} or \emph{pooling parameter}. Our setup for Wasserstein distortion is agnostic to the choice of the pooling PMF, as long as~\ref{TSG:symm} --~\ref{TSG:unif} are satisfied. In Section~\ref{sec:extreme}, we do not presume any specific PMF. In Section~\ref{sec:lowrate}, the particular PMF that we consider is the two-sided geometric distribution,
\begin{equation}
q_\sigma(k) = \begin{cases}
\frac{e^{1/\sigma} - 1}{e^{1/\sigma} + 1} \cdot e^{-|k|/\sigma} & \text{if $\sigma > 0$} \\
1 & \text{if $\sigma = 0$ and $k = 0$} \\
0 & \text{otherwise}.
\end{cases}
\end{equation} 
One can verify that $q_\sigma(k)$ satisfies~\ref{TSG:symm} --~\ref{TSG:unif}.

Given a realization $\seq{x}$, we define a sequence of measures $\seq{y} = \{y_n\}_{n = -\infty}^\infty$ via
\begin{equation}
    y_n = \sum_{k = -\infty}^\infty q_{\sigma}(k) \delta_{z_{n+k}},
	\label{eq:measure}
\end{equation}
where $\delta_\cdot$ denotes the Dirac delta function. Then $\seq{Y}$ is a measure-valued random process, i.e., for each $n$, $Y_n$ is a random measure. Each realization $y_n$ represents the statistics of the features pooled across a particular receptive field. The effective width of the receptive field are proportional to the parameter $\sigma$. See Fig.~\ref{fig:illustrate} for a pictorial illustration for the process.

\begin{figure*}[ht]
	\centering
	\includegraphics[width=0.9\linewidth]{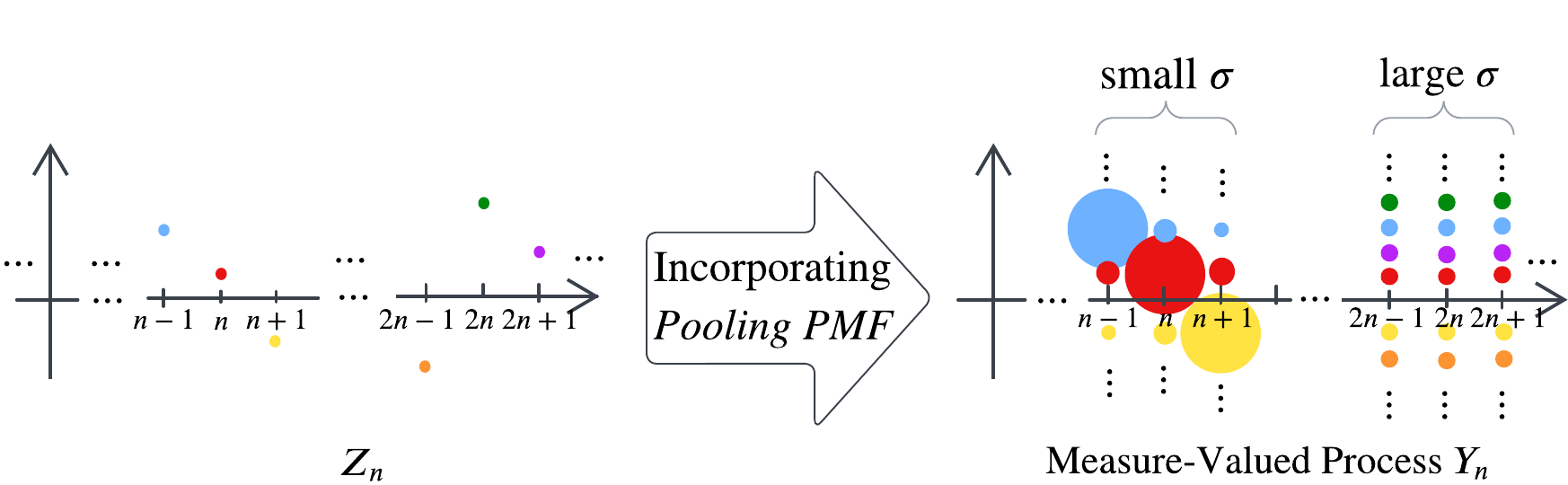}
	\label{fig:illustrate}
	\caption{A pictorial illustration of (\ref{eq:measure})~\cite{qiu2023wasserstein}. In the right plot, the size of the disk indicates the probability mass and the vertical coordinate of the center of the disk indicates the value.}
\end{figure*}

Similarly, we can define $\hat{\seq{x}} = \{\hat{x}_n\}_{n = -\infty}^\infty$, $\hat{\seq{z}} = \{\hat{z}_n\}_{n = -\infty}^\infty$, $\hat{\seq{y}} = \{\hat{y}_n\}_{n = -\infty}^\infty$, etc., for the reconstruction process. 

Consider any divergence between distributions $\mathcal{D}(\rho,\rho')$ over Euclidean space of a given dimension. Then our distortion measure at time $n$ is defined to be
\begin{equation}
	D_n = \mathcal{D}\left(y_n,\hat{y}_n\right).
\end{equation}

The Wasserstein distortion $D$ over a  block $\{-N,\ldots,N\}$ is defined as a spatial average
\begin{equation}
    D = \frac{1}{2N+1} \sum_{n = -N}^N D_n.
\end{equation}

Choices of the divergence $\mathcal{D}$ include $p$-Wasserstein distance~\cite{villani2009optimal} to the $p$-th power, sliced Wasserstein distance~\cite{pitie2005sliced,bonneel2015sliced,tartavel2016wasserstein,heitz2021wasserstein}, Sinkhorn distance~\cite{cuturi2013sinkhorn}, Maximum Mean Discrepancy (MMD)~\cite{smola2006maximum,li2017mmd,li2019implicit}, or the distance between Gram matrices~\cite{gatys2015texture,ustyuzhaninov2017does}. In this work, we choose $\mathcal{D}$ to be the $p$-Wasserstein distance~\cite[Def.~6.1]{villani2009optimal}\footnote{We refer to $W_p$ as the Wasserstein \emph{distance} even though it is not necessarily a metric if $d$ is not a metric.} to the $p$-th power. Let $d : \mathcal{Z} \times \hat{\mathcal{Z}} \mapsto [0,\infty)$ be a cost function. The $p$-Wasserstein distance induced by $d$ for distributions $\rho$ and $\rho'$ is 
\begin{equation}
    W_p(\rho,\rho') =\left( \min_{X \sim \rho, X' \sim \rho'} \E{d^p(X,\hat{X})}\right)^{1/p}.
\end{equation}
In Section~\ref{sec:lowrate}, we take $p = 2$.

\section{Fidelity and Realism as Extreme Cases}
\label{sec:extreme}

Let $\seq{x}$ and $\hat{\seq{x}}$ be two sequences and let $\seq{z}$ and $\hat{\seq{z}}$ denote the associated feature sequences, i.e., $z_n = \phi(T^n \seq{x})$ and $\hat{z}_n = \phi(T^n \hat{\seq{x}})$. If one is only concerned with fidelity to the original image, one might use an objective such as
\begin{equation}
\label{eq:reduce:fidelity}
\frac{1}{2N+1} \sum_{n = -N}^N d^p(z_n,\hat{z}_n),
\end{equation}
perhaps with $\phi$ being the identity map; conventional mean squared error can be expressed in this way with $p = 2$. This objective can be trivially recovered from Wasserstein distortion by taking $\sigma = 0$, invoking \ref{TSG:delta}, and applying the formula for the Wasserstein distance between point masses:
\begin{equation}
W_p(\delta_z,\delta_{\hat{z}}) = d(z,\hat{z}).
\end{equation}
Given that we are interested in smoothly interpolating between fidelity and realism, we would like Wasserstein distortion to reduce to (\ref{eq:reduce:fidelity}) in the limit as $\sigma \rightarrow 0$. We next identify conditions under which this continuity result holds. Note that this result does not require $d$ to be a metric.
\begin{thm}
Suppose $q$ satisfies \ref{TSG:delta} -- \ref{TSG:tail} and $\seq{z}$, $\hat{\seq{z}}$, and $q$ together satisfy
\begin{equation}
\label{eq:fidelity:hyp}
   \sum_{k = -\infty}^\infty q_\sigma(k) d^p(z_k,\hat{z}_k) < \infty
\end{equation}
for all $\sigma > 0$. Then we have
\begin{equation}
\lim_{\sigma \rightarrow 0} D_{0,\sigma} = d^p(z_0,\hat{z}_0).
\end{equation}
\label{thm:fidelity}
\end{thm}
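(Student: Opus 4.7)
My plan is to sandwich $D_{0,\sigma} = W_p^p(y_0,\hat y_0)$ between matching bounds that both tend to $d^p(z_0,\hat z_0)$ as $\sigma \to 0$.

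For the upper bound I would exhibit the explicit (sub-optimal) coupling
\begin{equation}
\pi_\sigma \;=\; \sum_{k=-\infty}^\infty q_\sigma(k)\,\delta_{(z_k,\hat z_k)},
\end{equation}
whose marginals are $y_0$ and $\hat y_0$ by construction, yielding
\begin{equation}
W_p^p(y_0,\hat y_0) \;\le\; \sum_{k=-\infty}^\infty q_\sigma(k)\,d^p(z_k,\hat z_k).
\end{equation}
To pass to the limit I split the sum at the threshold $K$ supplied by \ref{TSG:tail}. The $|k|<K$ portion is a finite sum in which, by \ref{TSG:delta} and \ref{TSG:cont}, only the $k=0$ term survives in the limit, so this part tends to $d^p(z_0,\hat z_0)$. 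For $|k|\ge K$ and $\sigma\in(0,\epsilon]$, \ref{TSG:tail} provides the pointwise domination $q_\sigma(k)\le q_\epsilon(k)$, and hypothesis (\ref{eq:fidelity:hyp}) at $\sigma=\epsilon$ makes the dominating series summable; dominated convergence then sends the tail to $0$.

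For the lower bound I would exploit marginal constraints directly. Let $\alpha_\sigma = y_0(\{z_0\})\ge q_\sigma(0)$ and $\hat\alpha_\sigma = \hat y_0(\{\hat z_0\})\ge q_\sigma(0)$ (the inequalities may be strict if additional indices contribute mass to $z_0$ or $\hat z_0$). For any coupling $\pi\in\Pi(y_0,\hat y_0)$, applying inclusion-exclusion to the slices $\{z_0\}\times\hat{\mathcal{Z}}$ and $\mathcal{Z}\times\{\hat z_0\}$ forces $\pi(\{(z_0,\hat z_0)\}) \ge \alpha_\sigma + \hat\alpha_\sigma - 1 \ge 2q_\sigma(0)-1$. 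Discarding all other (non-negative) contributions gives
\begin{equation}
\int d^p\,d\pi \;\ge\; \max\{0,\,2q_\sigma(0)-1\}\,d^p(z_0,\hat z_0),
\end{equation}
and \ref{TSG:delta} together with \ref{TSG:cont} yield $q_\sigma(0)\to 1$, so $\liminf_{\sigma\to 0} W_p^p(y_0,\hat y_0) \ge d^p(z_0,\hat z_0)$, completing the sandwich.

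The main obstacle is the tail control in the upper bound: without \ref{TSG:tail}, $q_\sigma(k)$ at large $|k|$ could exceed its value at any fixed positive $\sigma$ as $\sigma\to 0$, defeating dominated convergence even though each term vanishes pointwise. Property \ref{TSG:tail} is precisely the hypothesis that closes this gap, after which everything else is routine bookkeeping.
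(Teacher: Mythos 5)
Your proposal is correct and takes essentially the same route as the paper: the upper bound uses the identical diagonal coupling $\sum_k q_\sigma(k)\,\delta_{(z_k,\hat z_k)}$, the split at the threshold $K$ from \ref{TSG:tail}, and dominated convergence on the tail; the lower bound is the same Fr\'echet-type inequality $q_\sigma(0,0)\ge 2q_\sigma(0)-1$. The only cosmetic difference is that you apply the Fr\'echet bound directly to couplings of the measures $y_0,\hat y_0$, whereas the paper phrases it in terms of a joint PMF over index pairs with marginals $q_\sigma$; your version is if anything the cleaner way to state it, since $W_p$ is defined as an infimum over measure couplings.
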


Likewise, we show that Wasserstein distortion continuously reduces to pure realism in the large-$\sigma$ limit. We use $\stackrel{w}{\rightarrow}$ to denote weak convergence.

\begin{thm}
Assume that $\seq{X}$ is ergodic, i.e., for a set $A$ of sequences, if $T^{-1}(A) = T$ then $Pr(\seq{X} \in A) = 0$ or $1$. Suppose $q$ satisfies \ref{TSG:symm}, \ref{TSG:mono}, and \ref{TSG:unif} and $d$ is a metric.  Let $F_N$ (resp.\ $\hat{F}_N$) denote the empirical CDF of $\{z_{-N},\ldots,z_N\}$ (resp.\ $\{\hat{z}_{-N},\ldots,\hat{z}_N\}$) and suppose we have
\begin{align}
\label{eq:realism:CDF}
	F_N \stackrel{w}{\rightarrow} F \ &\text{and} \ \hat{F}_N \stackrel{w}{\rightarrow} \hat{F}; \\
\label{eq:realism:moment}
	\int d^p(z,0) dF_N &\rightarrow 
	\int d^p(z,0) dF < \infty \nonumber \\
\text{and }
	\int d^p(z,0) d\hat{F}_N &\rightarrow 
	\int d^p(z,0) d\hat{F} < \infty; \\
\intertext{and, for all $\sigma$,} 
   \sum_{k = -\infty}^\infty q_\sigma(k) &d^p(z_k,0) < \infty  \nonumber \\
\text{and }  
	\sum_{k = -\infty}^\infty q_\sigma(k) &d^p(\hat{z}_k,0) < \infty.
	\label{eq:realism:summable}
\end{align}
%\begin{equation}
%\label{eq:averagemoment}
%\lim_{N \rightarrow \infty} 
%\frac{1}{2N + 1} \sum_{n = -N}^N d^p(z_n,0) = \int d^p(z,0) dF(z)
%\end{equation}
%and likewise for $\hat{\seq{z}}$ and $\hat{F}$. 
Then we have
\begin{equation}
\lim_{\sigma \rightarrow \infty} D_{0,\sigma} = W_p^p(F,\hat{F}).
\end{equation}
\label{thm:realism}
\end{thm}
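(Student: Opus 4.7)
The plan is to express $y_0$ as a convex mixture of the empirical measures $F_N$ indexed by window size, and then push $W_p^p$ through that mixture via its joint convexity. A summation-by-parts using \ref{TSG:symm} and \ref{TSG:mono} yields the identity
\[
y_0 \;=\; \sum_{N=0}^{\infty} a_\sigma(N)\, F_N,
\qquad
a_\sigma(N) := (2N+1)\bigl[q_\sigma(N) - q_\sigma(N+1)\bigr].
\]
To verify, note that the coefficient of $\delta_{z_k}$ on the right telescopes to $\sum_{N\ge |k|}[q_\sigma(N)-q_\sigma(N+1)] = q_\sigma(|k|) = q_\sigma(k)$, the last equality being \ref{TSG:symm}. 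Property \ref{TSG:mono} forces $a_\sigma(N) \ge 0$, and a second telescoping gives $\sum_N a_\sigma(N) = \sum_k q_\sigma(k) = 1$, so $a_\sigma(\cdot)$ is a genuine PMF on $\{0,1,2,\ldots\}$. Crucially, \ref{TSG:unif} gives $a_\sigma(N) \to 0$ for each fixed $N$ as $\sigma \to \infty$, so all of its mass escapes to infinity: $\sum_{N<K} a_\sigma(N) \to 0$ for every fixed $K$. The same construction applies to $\hat{y}_0$ and $\hat{F}_N$.

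Next, I would use \eqref{eq:realism:CDF} and \eqref{eq:realism:moment} together with the standard characterization of $W_p$-convergence---weak convergence plus convergence of the $p$-th moment about any base point (\cite[Thm.~6.9]{villani2009optimal})---to deduce $W_p(F_N,F) \to 0$ and $W_p(\hat{F}_N,\hat{F}) \to 0$; in particular $\{W_p^p(F_N,F)\}_N$ is bounded. Invoking joint convexity of $W_p^p$ then gives
\[
W_p^p(y_0, F) \;\le\; \sum_{N=0}^{\infty} a_\sigma(N)\, W_p^p(F_N, F),
\]
and a standard $\varepsilon$-splitting---pick $K$ so that $W_p^p(F_N,F) < \varepsilon$ for $N \ge K$, then take $\sigma$ large enough that $\sum_{N<K} a_\sigma(N)$ is negligible---shows the right-hand side tends to $0$. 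The same argument gives $W_p(\hat{y}_0,\hat{F}) \to 0$. Because $d$ is a metric, $W_p$ obeys the triangle inequality, so
\[
\bigl|W_p(y_0,\hat{y}_0) - W_p(F,\hat{F})\bigr| \;\le\; W_p(y_0,F) + W_p(\hat{y}_0,\hat{F}) \;\longrightarrow\; 0,
\]
and raising to the $p$-th power yields $D_{0,\sigma} \to W_p^p(F,\hat{F})$, as required.

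The main obstacle is carrying out the Abel identity and the convexity bound rigorously when the underlying sums involve infinitely many Dirac masses with potentially unbounded $p$-th moments. The summability hypothesis \eqref{eq:realism:summable}, combined with the inequality $d^p(z,\hat z) \le 2^{p-1}[d^p(z,0) + d^p(\hat z,0)]$ (valid since $d$ is a metric), guarantees that every Wasserstein distance appearing above is finite, so both the convexity inequality and the $\varepsilon$-splitting are meaningful; the rest of the argument is routine bookkeeping.
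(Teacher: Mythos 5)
Your proof is correct and takes a genuinely different route than the paper's, though both pivot on the same underlying Abel summation. You apply Abel summation directly at the measure level to obtain the mixture representation $y_{0,\sigma}=\sum_{N\ge 0} a_\sigma(N)\,F_N$ (your $a_\sigma(N)$ is exactly the quantity $r_\sigma(m)$ in the paper's Lemma~\ref{lem:cesaro}), and then finish via convexity of $W_p^p$ in its first argument together with an $\varepsilon$-split against the escape of $a_\sigma(\cdot)$ to infinity. The paper instead packages Abel summation as a scalar lemma on Ces\`{a}ro sums (Lemma~\ref{lem:cesaro}), applies it to CDF evaluations and to $p$-th moments to establish $y_{0,\sigma}\stackrel{w}{\rightarrow} F$ together with moment convergence, and only then invokes Villani's Theorem~6.9 to conclude $W_p(y_{0,\sigma},F)\to 0$. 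You also use Theorem~6.9, but one step earlier---to turn (\ref{eq:realism:CDF})--(\ref{eq:realism:moment}) into $W_p(F_N,F)\to 0$---after which the convexity bound lifts this to $W_p(y_{0,\sigma},F)\to 0$ without a standalone Ces\`{a}ro lemma, and delivers finiteness of $W_p(y_{0,\sigma},F)$ for free from boundedness of $\{W_p^p(F_N,F)\}_N$. One step worth making explicit when you verify $\sum_{N} a_\sigma(N)=1$ and the telescope $\sum_{N\ge|k|}[q_\sigma(N)-q_\sigma(N+1)]=q_\sigma(|k|)$: you need the boundary term $(2M+1)\,q_\sigma(M+1)\to 0$ as $M\to\infty$, which follows from \ref{TSG:symm}, \ref{TSG:mono}, and normalization of $q_\sigma$ (a decreasing nonnegative summable sequence satisfies $M\,q_\sigma(M)\to 0$).
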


It follows from the previous result that when the source ensemble is ergodic, as occurs with textures, then in the large-$\sigma$ limit Wasserstein distortion reduces to the ensemble form of realism. That is, it equals the $p$-Wasserstein distance to the $p$-th power between the true distributions of the images and reconstructions, denoted by $F$ and $\hat{F}$ in the following corollary.

\begin{cor}
Suppose $q$ satisfies \ref{TSG:symm}, \ref{TSG:mono}, and \ref{TSG:unif} and $d$ is a metric. Suppose $\seq{X}$ and $\hat{\seq{X}}$ are stationary ergodic processes and let 
$F$ (resp.\ $\hat{F}$) denote the CDF of $Z_0$ (resp.\ $\hat{Z}_0$). If
\begin{equation} \E{d^p(Z_0,0)} < \infty \ \text{and} \ 
\E{d^p(\hat{Z}_0,0)} < \infty,
\label{eq:finitepthmoment}
\end{equation}
then we have
\begin{equation}
\lim_{\sigma \rightarrow \infty} D_{0,\sigma} = W_p^p(F,\hat{F}) \quad \text{a.s.}
\end{equation}
\label{cor:ergodic}
\end{cor}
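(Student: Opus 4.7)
The plan is to apply Theorem~\ref{thm:realism} realization-by-realization to almost every pair $(\seq{z}, \hat{\seq{z}})$, verifying its three hypotheses on a single full-probability event. I first note that $\seq{Z}$ and $\hat{\seq{Z}}$ inherit stationary ergodicity from $\seq{X}$ and $\hat{\seq{X}}$: any shift-invariant event for the feature process pulls back under the measurable, shift-equivariant map $\seq{x}\mapsto(\phi(T^n\seq{x}))_n$ to a $T$-invariant event for the source process, which has probability $0$ or $1$.

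For hypothesis (\ref{eq:realism:CDF}), I would apply Birkhoff's ergodic theorem to $\mathbbm{1}\{Z_0\le z\}$ at each $z$ in a countable dense subset of continuity points of $F$, obtaining $F_N(z)\to F(z)$ jointly almost surely; by a standard argument this upgrades to $F_N\stackrel{w}{\rightarrow}F$ a.s., and the same applies to $\hat{F}_N$. For (\ref{eq:realism:moment}), Birkhoff applied directly to the integrable random variable $d^p(Z_0,0)$, whose integrability is exactly (\ref{eq:finitepthmoment}), gives $\int d^p(z,0)\,dF_N(z)\to \E{d^p(Z_0,0)}<\infty$ a.s., and analogously for the hatted version.

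The main obstacle is (\ref{eq:realism:summable}), which must hold for \emph{every} $\sigma>0$ on one full-probability event; a naive union bound over $\sigma$ in a countable dense subset does not suffice. I would exploit the monotonicity property~\ref{TSG:mono} together with a uniform Cesaro bound. Set
\begin{equation*}
C(\omega) \;:=\; \sup_{N\ge 0}\ \frac{1}{2N+1}\sum_{|k|\le N} d^p(Z_k(\omega),0),
\end{equation*}
which is finite a.s. by the ergodic theorem. Writing the target sum symmetrically about $k=0$ and applying Abel summation—using both that $q_\sigma(k)$ is non-increasing in $|k|$ by \ref{TSG:mono} and that any non-increasing summable sequence $b_k$ satisfies $kb_k\to 0$ (since $kb_k\le 2\sum_{j\ge k/2} b_j\to 0$)—one derives
\begin{equation*}
\sum_{k=-\infty}^{\infty} q_\sigma(k)\, d^p(Z_k(\omega),0) \;\le\; C(\omega) \sum_{k=-\infty}^{\infty} q_\sigma(k) \;=\; C(\omega),
\end{equation*}
a bound \emph{independent of $\sigma$}. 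The identical argument handles $\hat{\seq{Z}}$.

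Intersecting the three full-probability events from the steps above produces a single event on which all hypotheses of Theorem~\ref{thm:realism} hold for every $\sigma>0$ simultaneously; applying that theorem pointwise then yields $\lim_{\sigma\to\infty} D_{0,\sigma} = W_p^p(F,\hat{F})$ a.s., as desired.
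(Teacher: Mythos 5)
Your proof is correct and closely parallels the paper's, but you treat hypothesis~\eqref{eq:realism:summable} more carefully, which is a genuine (if localized) improvement. For~\eqref{eq:realism:CDF} and~\eqref{eq:realism:moment}, both proofs invoke the ergodic theorem together with~\eqref{eq:finitepthmoment}. For~\eqref{eq:realism:summable}, the paper computes $\E{\sum_k q_\sigma(k)\, d^p(Z_k,0)} = \E{d^p(Z_0,0)} < \infty$ using stationarity and monotone convergence, and concludes the random sum is finite a.s.; but that argument produces, for each \emph{fixed} $\sigma$, a $\sigma$-dependent null set. Since~\ref{TSG:symm}--\ref{TSG:unif} provide no monotonicity of $q_\sigma(k)$ in $\sigma$, intersecting over a countable dense set of $\sigma$ values does not obviously yield the hypothesis for \emph{all} $\sigma > 0$, which is what Lemma~\ref{lem:cesaro} (and hence Theorem~\ref{thm:realism}) requires before the limit $\sigma\to\infty$ can be taken. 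Your bound
\begin{equation*}
  \sum_{k=-\infty}^{\infty} q_\sigma(k)\,d^p(Z_k,0)
   \;\le\; \sup_{N\ge 0}\, \frac{1}{2N+1}\sum_{|k|\le N} d^p(Z_k,0) \;<\; \infty,
\end{equation*}
which holds simultaneously for every $\sigma > 0$ on the single full-probability event where the Ces\`aro averages of $d^p(Z_k,0)$ converge, closes this gap cleanly; it is proved by the same Abel rearrangement underlying Lemma~\ref{lem:cesaro}, together with~\ref{TSG:symm},~\ref{TSG:mono}, and the identity $\sum_m r_\sigma(m)=1$. One small economy: since both $d^p(Z_k,0)$ and the increments $q_\sigma(m)-q_\sigma(m+1)$ are nonnegative, Tonelli's theorem justifies the rearrangement outright, so the auxiliary observation that $k\,q_\sigma(k)\to 0$ is not actually needed.
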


For all proofs in this section, please see Appendix~\ref{app:proofs}.

\section{Low-rate Scheme Analysis}
\label{sec:lowrate}	

We turn to the problem of optimal compression under Wasserstein 
distortion. We assume that $\seq{X}$ is a doubly-infinite process, and $X_n$'s are i.i.d.\ over a finite alphabet. The code will be defined over a block of length $2N+1$, as described below. To obtain $\seq{\hat{X}}$, divide the time horizon into blocks of size $2N+1$ and apply the same code separately on each block. The distortion is also calculated within each block. For the remainder of the section, we focus on the `center block', i.e., the block containing index $0$. The same analysis applies to all other blocks.

\begin{defn}
	An $(N,R,\Delta)$-code is an encoder
	\begin{equation}
		f_N: \mathcal{X}_{-N}^N \to \{0,1\}^{\ceil{(2N+1)R}},
	\end{equation}
	and a decoder
	\begin{equation}
		g_N: \{0,1\}^{\ceil{(2N+1)R}} \to \hat{\mathcal{X}}_{-N}^N,
	\end{equation}
	such that, when the code is applied to each of the blocks of length $2N+1$, over the same block,
	\begin{equation}
		\E{D} = \E{\frac{1}{2N+1} W_2^2(Y_n,\hat{Y}_n)} \leq \Delta,
	\end{equation}
	where $X_i \in \mathcal{X}$, $\hat{X}_i \in \hat{\mathcal{X}}$ for all  $i = -N,\ldots,N$, and $(\hat{X}_{-N},\ldots,\hat{X}_N) = g_N(f_N(X_{-N},\ldots,X_N))$. We call $R$ the rate of the code.
\end{defn}

We define the rate-distortion region in the usual sense:

\begin{defn}
	A rate-distortion pair $(R,\Delta)$ is achievable if there exists a sequence of $N_1,N_2,\ldots$, $N_i \to \infty$ as $i \to \infty$, such that when the same $(N_i,R,\Delta)$ code is applied to each block of length $2N_i+1$, $\lim_{i \to \infty} \E{D(X_{-N_i}^{N_i}, g_{N_i}(f_{N_i}(X_{-N_i}^{N_i})))} \leq \Delta$. The rate-distortion region $RD_\sigma$ is the closure of the set of achievable rate-distortion pairs $(R,\Delta)$.
\end{defn}

From Theorem~\ref{thm:realism}, we see that if $\hat{X}_n$ has the same distribution as $X_n$ with the identity mapping being the only kernel $\phi$, our distortion will converge to $0$ when $\sigma \to \infty$. Since the distortion diminishes, we would not need to send any information about the source, hence the optimal rate should also converge to $0$ when $\sigma \to \infty$. In other words, in the large-$\sigma$ scheme, we can achieve low rate and low distortion simultaneously. This is akin to the way that low-rate, low-distortion compression
of textures is possible if one accepts an independent realization of the
texture as a reconstruction. We are interested in studying the tradeoff 
in the speed with which rate and distortion vanish as $\sigma \to \infty$. 
We shall see that the correct scaling for the rate-distortion pair
is $(\sigma^{\alpha}, \sigma^{\beta})$, where $\alpha < 0$ and 
$\beta < 0$, which motivates the following definition.

\begin{defn}
	Let $RD_\sigma$ be the rate-distoriton region with parameter $\sigma$. The pair $(\alpha,\beta)$ is asymptotically achievable if for all sufficiently large $\sigma$, $(\sigma^\alpha,\sigma^\beta) \in RD_\sigma$. Convergence rate region is the closure of the set of achievable pairs $(\alpha,\beta)$.
\end{defn}

We consider the case of discrete alphabets with arbitrary alphabet size, with the only kernel being identity mapping, and we provide the achievability arguments for two schemes and a converse argument. These results partially
characterize the optimal tradeoff between $\alpha$ and $\beta$
and are summarized in Figure~\ref{fig:bound}.

Let $\mathcal{X} = \hat{\mathcal{X}} = \{1,2,\ldots,A\},\ A \geq 2$ be the source/reconstruction alphabet, and let $X_n$ be i.i.d.\ samples drawn according to some distribution $\mathcal{p} = (\mathcal{p}_1,\mathcal{p}_2,\ldots,\mathcal{p}_A)$ over $\{1,2,\ldots,A\}$. Let the underlying distance over $\mathcal{X} \times \hat{\mathcal{X}}$ be defined as
\begin{equation}
	\mathcal{d} = \begin{bmatrix} 0 & d_{1,2} & \cdots & d_{1,A} \\ d_{1,2} & 0 & \cdots & d_{2,A} \\ \vdots & \vdots & \ddots & \vdots \\ d_{1,A} & \cdots & d_{A-1,A} & 0 \end{bmatrix};
\end{equation}
i.e., the distance from a symbol to itself is $0$, and the distance between symbols $i$ and $j$, $i \neq j$ is $d_{i,j} > 0$.

%For simplicity, we always consider the distortion at index $0$. Distortion at other indices can be calculated via a shift of indices, and all results holds. Notice that the squared 2-Wasserstein distance under $\mathcal{d}$ is
%\begin{align}
%W_2^2\left(Y_0,\hat{Y}_0\right) &= \inf_{Y \sim Y_n,\hat{Y} \sim \hat{Y}_n} \E{\mathcal{d}^2\left(Y,\hat{Y}\right)} \nonumber \\
%&= \inf_{\pi \in \Pi\left(Y_0,\hat{Y}_0\right)} \sum_{i,j} d_{i,j}^2 \pi(i,j), \label{eq:wassdefn}
%\end{align}
%where $\Pi(p, q)$ is the collection of all joint distributions with $p$ and $q$ as marginal distributions, as commonly used in Optimal Transport literature.

\begin{figure*}[htp]
    \centering
    \includegraphics[width=0.75\linewidth]{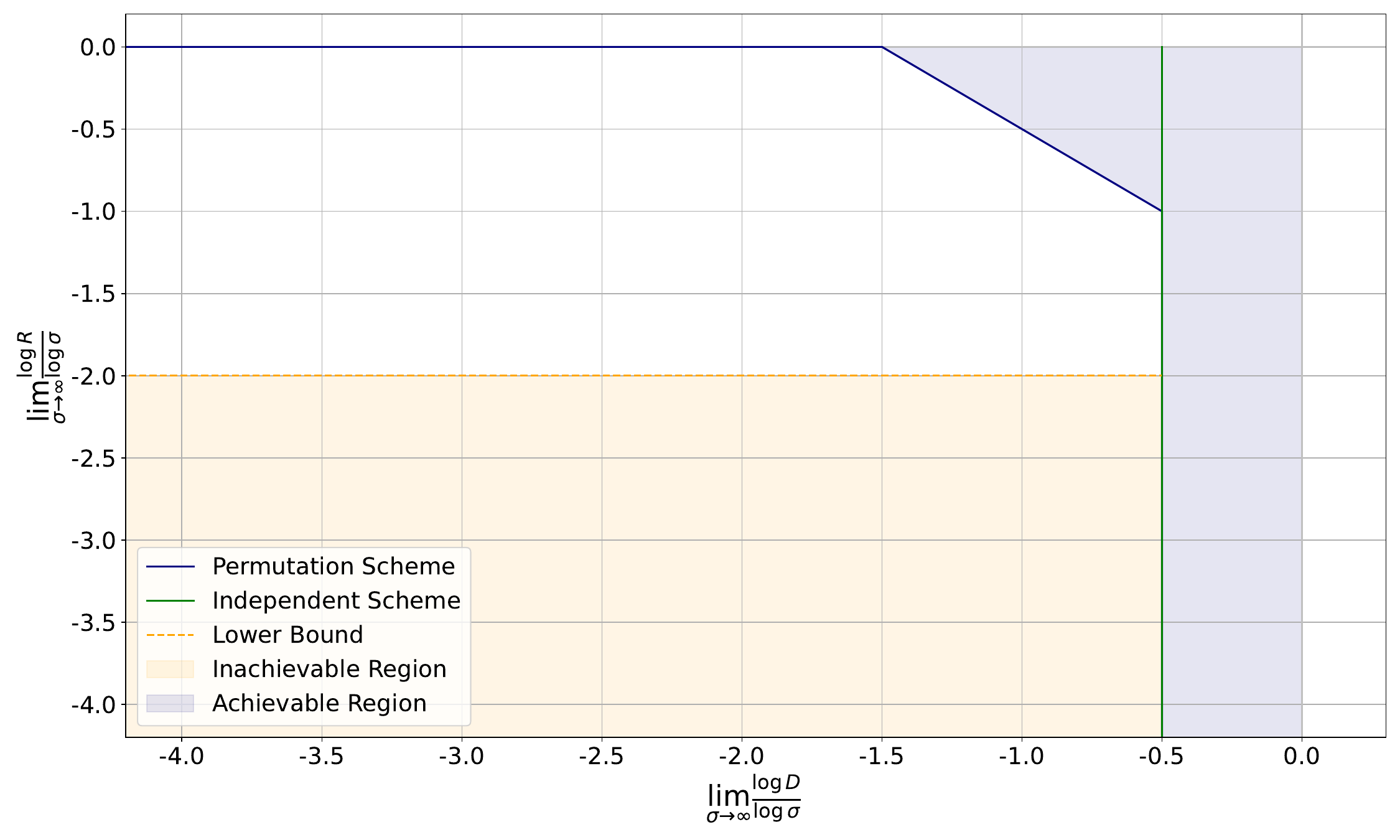}
    \caption{Upper- and Lower- bounds for low-rate scheme.}
    \label{fig:bound}
\end{figure*}

For the achievability and converse proofs, we analyze either $W_{2,\mathcal{d}_{\max}}^2(\cdot,\cdot)$ or $W_{2,\mathcal{d}_{\min}}^2(\cdot,\cdot)$ instead of $W_{2,\mathcal{d}}^2(\cdot,\cdot)$, where $\mathcal{d}_{\max}$ (resp., $\mathcal{d}_{\min}$) is $\mathcal{d}$ with all off-diagonal entries replaced by the maximum off-diagonal entry $d_{\max}$ (resp., minimum off-diagonal entry $d_{\min}$). We write the corresponding Wasserstein distortion as $D_{\mathcal{d}_{\max}}$ (resp., $D_{\mathcal{d}_{\min}}$). The sandwich argument described in Section~\ref{sec:sandwich} would lead the analysis back to $W_{2,\mathcal{d}}^2$ and $D$. For complete details, see Section~\ref{sec:sandwich}.

\subsection{Achievability -- Two Schemes}

In this section, we wish to upper bound $\E{D} = \frac{1}{2N+1} \sum_{n = -N}^N \E{W_2^2(Y_n, \hat{Y}_n)}$. We will analyze $W_{2,\mathcal{d}_{\max}}^2(Y_n,\hat{Y}_n)$ instead of $W_{2,\mathcal{d}}^2(Y_n,\hat{Y}_n)$; the same bounds can be applied on $W_{2,\mathcal{d}_{\min}}^2(Y_n,\hat{Y}_n)$, and we conclude our theorem on $W_{2,\mathcal{d}}^2(Y_n,\hat{Y}_n)$ using the sandwich argument in Section~\ref{sec:sandwich}. In this subsection, $W_2^2(\cdot,\cdot)$ denotes $W_{2,\mathcal{d}_{\max}}^2(\cdot,\cdot)$ unless otherwise specified.

Now, $W_2^2(\cdot,\cdot)$ admits a closed form:~\cite[pp. 10]{villani2009optimal}
\begin{align}
W_2^2(Y_n,\hat{Y}_n) &= \min_{Y \sim Y_n, \hat{Y} \sim \hat{Y}_n} \E{d^2(Y,\hat{Y})} \nonumber \\
&= d_{\max}^2 \cdot \min_{Y \sim Y_n, \hat{Y} \sim \hat{Y}_n} \E{\mathbbm{1}_{Y \neq \hat{Y}}}\\
&= \frac{d^2_{\max}}{2}\cdot \Vert Y_n,\hat{Y}_n\Vert_{\mathrm{TV}} \nonumber \\
&= \frac{d^2_{\max}}{2}\cdot \left(\sum_{i = 1}^A \abs{Y_n(\{i\}) - \hat{Y}_n(\{i\})}\right) \\
&= \frac{d^2_{\max}}{2}\cdot \left(\sum_{i = 1}^A \left|\sum_{m = -\infty}^\infty q_{\sigma}(m-n) \mathbbm{1}_{X_m = i} \right.\right. \nonumber \\
&\phantom{=} \left.\left.- \sum_{m = -\infty}^\infty q_{\sigma}(m-n) \mathbbm{1}_{\hat{X}_m = i}\right|\right).
\end{align}

We consider two schemes, one where the reconstruction $\seq{\hat{X}}$ is an independent realization of $\mathcal{p}$, the other where $\seq{\hat{X}}$ is a random permutation of the source within consecutive windows of a given size.

\subsubsection{Independent Realization}

Assume $\seq{\hat{X}}$ is an independent i.i.d.\ realization of $\mathcal{p}$. 

\begin{thm}
	The pair $(-\infty,\beta)$ is asymptotically achievable for all $\beta > -1/2$.
\end{thm}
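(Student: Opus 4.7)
The scheme has the decoder emit an i.i.d.\ process $\hat{\seq{X}}$ with marginal $\mathcal{p}$ independent of $\seq{X}$. Because no information needs to flow from encoder to decoder (shared randomness, or a seed whose length is negligible compared to any polynomial in $\sigma$, suffices), the rate can be driven to zero faster than any polynomial in $\sigma$, which is precisely the content of $\alpha = -\infty$. It then remains to show that the resulting expected distortion scales as $O(\sigma^{-1/2})$, because any such bound is dominated by $\sigma^\beta$ for every sufficiently large $\sigma$ whenever $\beta > -1/2$.

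The plan is to reduce the distortion bound to a single-index variance computation. First, stationarity of the joint process $(X_m, \hat{X}_m)$ makes $\E{W_2^2(Y_n,\hat{Y}_n)}$ independent of $n$, so $\E{D} = \E{W_2^2(Y_0,\hat{Y}_0)}$. By the sandwich argument of Section~\ref{sec:sandwich}, I would upper-bound this using the $\mathcal{d}_{\max}$ metric, for which the paper has already recorded the closed form
$$W_{2,\mathcal{d}_{\max}}^2(Y_0,\hat{Y}_0) = \frac{d_{\max}^2}{2} \sum_{i=1}^{A} |S_i - \hat{S}_i|,$$
where $S_i := \sum_m q_\sigma(m)\,\mathbbm{1}_{X_m = i}$ and $\hat{S}_i$ is defined analogously in terms of $\hat{X}_m$.

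I would then control $\E{|S_i - \hat{S}_i|}$ via an $L^2$ bound. Since $\E{S_i} = \E{\hat{S}_i} = \mathcal{p}_i$ and $S_i$, $\hat{S}_i$ are independent sums of independent weighted indicators, Jensen's inequality yields
$$\E{|S_i - \hat{S}_i|} \leq \sqrt{\,2\,\mathcal{p}_i(1-\mathcal{p}_i)\sum_{k} q_\sigma(k)^2\,}.$$
Substituting the two-sided geometric PMF with $c_\sigma = (e^{1/\sigma}-1)/(e^{1/\sigma}+1)$ gives
$$\sum_{k} q_\sigma(k)^2 = c_\sigma^2 \cdot \frac{1 + e^{-2/\sigma}}{1 - e^{-2/\sigma}} = \Theta(1/\sigma)$$
as $\sigma \to \infty$, since $c_\sigma \sim 1/(2\sigma)$ while the ratio is $\sim \sigma$. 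Summing over the $A$ letters gives $\E{W_{2,\mathcal{d}_{\max}}^2(Y_0,\hat{Y}_0)} = O(\sigma^{-1/2})$, and the sandwich argument transfers this to the original $\mathcal{d}$.

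The main obstacle is conceptual rather than computational: formalizing what it means for the rate exponent to be $-\infty$ (by allowing common randomness, or reserving an arbitrarily slowly decaying rate for a PRNG seed so that the decoder output is genuinely an independent realization of the source law). The distortion calculation itself is a routine variance argument; the only care needed is in verifying the $\Theta(1/\sigma)$ scaling of $\sum_k q_\sigma(k)^2$ for the specific two-sided geometric pooling PMF, and in checking that the cross term in $\E{(S_i - \hat{S}_i)^2}$ vanishes thanks to independence and the matching means.
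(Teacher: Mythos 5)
Your proposal follows essentially the same route as the paper's proof: reduce by stationarity to a single index $n=0$, pass to $\mathcal{d}_{\max}$ via the sandwich argument, apply Cauchy--Schwarz/Jensen to $\E{|S_i - \hat{S}_i|}$, use independence and matched means to get $\E{(S_i-\hat{S}_i)^2} = 2\mathcal{p}_i(1-\mathcal{p}_i)\sum_k q_\sigma(k)^2$, and then note that $\sum_k q_\sigma(k)^2 = \Theta(1/\sigma)$ for the two-sided geometric (your $c_\sigma^2(1+e^{-2/\sigma})/(1-e^{-2/\sigma})$ is algebraically identical to the paper's $(e^{4/\sigma}-1)/(e^{1/\sigma}+1)^4$), giving $\E{D} = O(\sigma^{-1/2})$. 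Your one side remark is actually a legitimate subtlety the paper leaves implicit: with a strictly deterministic zero-bit code the decoder cannot emit an independent realization, so either private decoder randomness must be permitted or a vanishingly-low-rate seed must be sent; the paper simply asserts the rate is $0$, so you are, if anything, being slightly more careful than the source.
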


\begin{proof}
Choose the sequence of $N$ to be the sequence of integers. The rate of this scheme is $0$ as nothing is transmitted. In this scheme, since both processes $\seq{X}$ and $\seq{\hat{X}}$ are i.i.d., $W_2^2(Y_n,\hat{Y}_n) = W_2^2(Y_0,\hat{Y}_0)$ for all $n = -N,\ldots,N$; we focus on $W_2^2(Y_0,\hat{Y}_0)$:
\begin{align}
	&\E{W_2^2(Y_0,\hat{Y}_0)} = \frac{d^2_{\max}}{2} \times \nonumber \\
	& \left(\sum_{i = 1}^A \mathbb{E} \left[ \left| \sum_{m = -\infty}^\infty q_\sigma(m) \mathbbm{1}_{X_m = i} - \sum_{m = -\infty}^\infty q_\sigma(m) \mathbbm{1}_{\hat{X}_m = i}\right| \right] \right),
\end{align}
%\begin{align}
%	\E{W_2^2(Y_0,\hat{Y}_0)} &= \frac{d^2_{\max}}{2} \cdot \left(\sum_{i = 1}^A \mathbb{E} \left[ \left| \sum_{m} q_\sigma(m) \mathbbm{1}_{X_m = i} \right.\right.\right. \nonumber \\
%    &\phantom{=} \left.\left.\left.- \sum_{m} q_\sigma(m) \mathbbm{1}_{\hat{X}_m = i}\right| \right] \right),
%\end{align}
and for each $i = 1,2,\ldots,A$,
\begin{align}
	&\quad \E{\abs{\sum_{m = -\infty}^\infty q_\sigma(m) \mathbbm{1}_{X_m = i} - \sum_{m = -\infty}^\infty q_\sigma(m) \mathbbm{1}_{\hat{X}_m = i}}} \\
	&\leq \left\{ \mathbb{E} \left[ \left( \sum_{m = -\infty}^\infty q_\sigma(m) (\mathbbm{1}_{X_m = i} - \mathcal{p}_i) \right.\right.\right. \nonumber \\
    &\phantom{\leq} \left.\left.\left.  - \sum_{m = -\infty}^\infty q_\sigma(m) (\mathbbm{1}_{\hat{X}_m = i} - \mathcal{p}_i)\right)^2 \right] \right\}^{1/2} \\
	&= \sqrt{2\E{\left(\sum_{m = -\infty}^\infty q_\sigma(m) (\mathbbm{1}_{X_m = i} - \mathcal{p}_i)\right)^2}} \label{eq:indsamedist}\\
	&= \sqrt{2\sum_{m = -\infty}^\infty q^2_\sigma(m) \E{\left(\mathbbm{1}_{X_m = i} - \mathcal{p}_i\right)^2}} \label{eq:xmiid} \\
	&= \sqrt{2\mathcal{p}_i(1-\mathcal{p}_i) \sum_{m = -\infty}^\infty q^2_\sigma(m)} \\
    &= \sqrt{2\mathcal{p}_i(1-\mathcal{p}_i)\frac{\exp\left(4/\sigma\right)-1}{\left(\exp\left(1/\sigma\right)+1\right)^4}},
\end{align}
where \eqref{eq:indsamedist} holds since $\seq{X}$ and $\seq{\hat{X}}$ are independent with same distribution, and \eqref{eq:xmiid} holds since $\seq{X}$ is i.i.d.

Hence for all $n = -N,\ldots,N$,
\begin{align}
    &\phantom{\leq} \E{W_2^2(Y_n,\hat{Y}_n)} \\ 
    &\leq \sqrt{\frac{\exp\left(4/\sigma\right)-1}{\left(\exp\left(1/\sigma\right)+1\right)^4}}\cdot \frac{d_{\max}^2}{2} \cdot \left(\sum_{i = 1}^A \sqrt{2\mathcal{p}_i(1-\mathcal{p}_i)}\right), \nonumber 
\end{align}
which implies
\begin{align}
    &\phantom{\leq} \E{D_{\mathcal{d}_{\max}}} = \frac{1}{2N+1} \sum_{n = -N}^N \E{W_2^2(Y_n,\hat{Y}_n)} \label{eq:indepschemebound}\\ 
    &\leq \sqrt{\frac{\exp\left(4/\sigma\right)-1}{\left(\exp\left(1/\sigma\right)+1\right)^4}}\cdot \frac{d_{\max}^2}{2} \cdot \left(\sum_{i = 1}^A \sqrt{2\mathcal{p}_i(1-\mathcal{p}_i)}\right). \nonumber 
\end{align}
This is an upper bound for $\E{D_{\mathcal{d}_{\max}}}$; using the sandwich argument~\ref{sec:sandwich}, the same upper bound applies to $\E{D}$. Let $N \to \infty$ and $\sigma \to \infty$, we see that the right hand side of~\eqref{eq:indepschemebound} behave as $1/\sqrt{\sigma}$; hence we conclude that as $\sigma \to \infty$, $(\alpha,\beta)$ is asymptotically achievable for $\alpha = -\infty, \beta > -1/2$.
\end{proof}

\subsubsection{Random Permutation}

Consider the following scheme: divide the indices into windows of size $k$, i.e., fix some non-negative integer $C \leq k-1$, and define $w_0 = \{-C,-C+1,\ldots,0,\ldots,k-C-1\}, w_1 = \{k-C,k-C+1,\ldots,2k-C-1\}, w_{-1} = \{-C-k,-C-k+1,\ldots,-C-1\}$, etc. Let $w_{r}$ be the window of remainders, i.e., $w_r$ contains all indices on both ends that do not fit into any of the windows of length $k$ above. We assume that $k$ grows sub-linear to $\sigma$. Within each window $w_i$ that is not the window of remainders, $\hat{X}_j, j \in w_i$ is a random permutation of $X_j, j \in w_i$. In other words, we repeatedly apply the random permutation test channel~\cite{tang2023capacity} on each window. For the window of remainders, the reconstructions always output the first symbol.  
% Since we assume $N \to \infty$, W.L.O.G.\ assume that no source symbol is included in more than one window. 

$k = 1$ marks a special case: the rate is $\log_2 A$ since we need to transmit the exact symbols, and the distortion is $0$ regardless of the choice of $\sigma$ since the reconstruction is identical to the source. In this case, $(\alpha,\beta)$ is asymptotically achievable if $\alpha = 0$ and $\beta = -\infty$.

\begin{thm}
	The pair $(\alpha,\beta)$ is asymptotically achievable if $\alpha + \beta > -3/2$ and $-1 < \alpha < 0$.
\end{thm}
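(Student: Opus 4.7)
The plan is to choose the window size $k = k(\sigma)$ to grow as a power of $\sigma$, picked so as to simultaneously meet the rate and distortion targets. The hypotheses $\alpha > -1$ and $\alpha + \beta > -3/2$ guarantee that the open interval $(-\alpha,\, \min(1, \beta + 3/2))$ is nonempty, so I fix some $\gamma$ in this interval and set $k = \lceil \sigma^\gamma \rceil$. Because $\gamma < 1$ we have $k/\sigma \to 0$, which will make $q_\sigma$ effectively flat across any sub-window.

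For the rate, the encoder needs only describe the multiset (type) of each sub-window, of which there are at most $\binom{k+A-1}{A-1}$, plus a vanishing overhead for window boundaries and the offset $C$. Taking $N \to \infty$, this yields per-symbol rate $R = O(\log\sigma / \sigma^\gamma)$, which is $o(\sigma^\alpha)$ since $\gamma + \alpha > 0$, and hence below $\sigma^\alpha$ for all sufficiently large $\sigma$.

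For the distortion, I work with $\mathcal{d}_{\max}$ throughout and transfer to $\mathcal{d}$ at the end via the sandwich argument of Section~\ref{sec:sandwich}. Using the closed form $W_{2,\mathcal{d}_{\max}}^2(Y_n,\hat{Y}_n) = \tfrac{d_{\max}^2}{2}\sum_i |S_i^{(n)}|$ and Cauchy--Schwarz, it suffices to bound $\E{(S_i^{(n)})^2}$, where $S_i^{(n)}$ decomposes over sub-windows as $S_i^{(n)} = \sum_w T_w^{(n,i)}$ with $T_w^{(n,i)} = \sum_{m \in w} q_\sigma(m-n)(\mathbbm{1}_{X_m = i} - \mathbbm{1}_{\hat{X}_m = i})$. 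Two structural facts reduce this to a per-window computation: (i) $\E{T_w^{(n,i)}} = 0$, because the random permutation preserves the window's empirical type and $X$ is i.i.d.; and (ii) the $T_w$'s are mutually independent across sub-windows (independent $X_m$'s and independent permutations). Hence $\E{(S_i^{(n)})^2} = \sum_w \E{(T_w^{(n,i)})^2}$. Applying the classical variance identity for a random-permutation sum yields
\[
\E{(T_w^{(n,i)})^2} = 2\,\mathcal{p}_i(1-\mathcal{p}_i)\sum_{m \in w}\bigl(q_\sigma(m-n)-\bar q_w\bigr)^2,
\]
where $\bar q_w$ averages $q_\sigma(\cdot - n)$ over $w$. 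The near-constancy estimate $|q_\sigma(m-n) - q_\sigma(m'-n)| = O(k/\sigma)\,M_w$, with $M_w = \max_{m\in w} q_\sigma(m-n)$, gives a per-window bound $O(k^3 M_w^2/\sigma^2)$; summing $\sum_w M_w^2 = O(1/(k\sigma))$ (a Riemann-sum-type estimate against $\sum_m q_\sigma(m)^2 = O(1/\sigma)$) then yields $\E{(S_i^{(n)})^2} = O(k^2/\sigma^3)$ and therefore $\E{|S_i^{(n)}|} = O(\sigma^{\gamma - 3/2})$. Averaging over $n \in \{-N,\ldots,N\}$ (the remainder-window contribution is $O(k/N) \to 0$) gives $\E{D_{\mathcal{d}_{\max}}} = O(\sigma^{\gamma - 3/2}) = o(\sigma^\beta)$, since $\gamma < \beta + 3/2$, and the sandwich argument transfers the bound to $\E{D}$.

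The main technical obstacle is the random-permutation-sum variance calculation. Executing the identity carefully must surface a factor of $(k/\sigma)^2$ from the near-flatness of $q_\sigma$ over a sub-window; this is precisely what improves the distortion exponent from the $-1/2$ of the independent-realization scheme to $\gamma - 3/2$, and it is where the coupling between rate and distortion arises. Once this $(k/\sigma)^2$ gain is secured, balancing the rate constraint $\gamma > -\alpha$ against the distortion constraint $\gamma < \beta + 3/2$ yields precisely the tradeoff $\alpha + \beta > -3/2$ claimed in the theorem.
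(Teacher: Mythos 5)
Your proposal is correct and follows the same essential route as the paper: the random-permutation-within-windows scheme with $k \sim \sigma^\gamma$, per-window type coding for the rate, and the key observation that each permutation preserves its window's empirical type, which is what kills a factor of $(k/\sigma)^2$ in the per-window second moment and delivers the $\sigma^{\gamma - 3/2}$ distortion scaling. Where you diverge technically is in how you extract that factor: you invoke the exact identity $\E{(T_w^{(n,i)})^2} = 2\mathcal{p}_i(1-\mathcal{p}_i)\sum_{m \in w}\bigl(q_\sigma(m-n)-\bar{q}_w\bigr)^2$, which makes the ``variance vanishes when the weights are flat across the window'' structure explicit in one line, whereas the paper subtracts the constant $1$ from each $e^{-|m-n|/\sigma}$ (legitimized by the type-preservation identity), Taylor-expands, and then bounds the variance by invoking negative correlation of the permuted summands. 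Your route is cleaner, but the identity is not quite a direct citation of the classical finite-population sampling formula, which is conditional on the window's contents; you need the version averaged over both the i.i.d.\ source draw and the independent permutation, and that takes a short derivation (expanding $\E{T_w^2}$ into the three cross-terms and using $\Pr(\pi(m')=m)=1/k$). Supplying that derivation closes the only real gap; the remaining bookkeeping --- independence across windows, $\sum_w M_w^2 = O(1/(k\sigma))$, Cauchy--Schwarz, the remainder-window fraction $O(k/N)$, and the sandwich argument --- mirrors the paper and yields the stated region $\alpha+\beta>-3/2$, $-1<\alpha<0$.
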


\begin{proof}
For fixed $\sigma$, choose $k = [\sigma^\gamma]$ for some $0 < \gamma < 1$. Choose the sequence of $N$ such that $\frac{k}{2N+1}\floor{\frac{2N+1}{k}} \to 1$ as $N \to \infty$. We use $I$ to denote the set of indices for the length $k$ windows which are contained completely within the $\{-N,\ldots,N\}$ block, with $|I| = \floor{(2N+1)/k}$.

For any finite $N$, for each window $w_i$ that is not the window of remainders, we need to transmit the count of each but the last symbol $i = 1,2,\ldots,A-1$ with $\ceil{\log_2 (k+1)}$ bits. For the window of remainders $w_r$, we do not need to transmit anything. Hence the rate is 
\begin{equation}
	(A-1)\ceil{\log_2 (k+1)} \floor{\frac{2N+1}{k}} \frac{1}{2N+1}.
\end{equation}
By our assumption, $\floor{\frac{2N+1}{k}} \frac{1}{2N+1} \to \frac{1}{k}$ as $N \to \infty$; hence we conclude that the asymptotic rate is $(A-1)\ceil{\log_2 (k+1)}/k$. 

For the distortion, write $\E{D_{\mathcal{d}_{\max}}}$ as
\begin{align}
	\E{D_{\mathcal{d}_{\max}}} &= \frac{1}{2N+1} \sum_{n = -N}^N \E{W_2^2(Y_n,\hat{Y}_n)} \\
	&= \frac{1}{2N+1} \left( \sum_{j \in I} \sum_{n \in w_j} \E{W_2^2(Y_n,\hat{Y}_n)} \right. \nonumber \\
	&\phantom{=} \left. \ + \sum_{n \in w_r} \E{W_2^2(Y_n,\hat{Y}_n)} \right) \\
	&\leq \left( \floor{\frac{2N+1}{k}} \max_{j \in I} k \max_{n \in w_j} \E{W_2^2(Y_n,\hat{Y}_n)} \right. \nonumber \\
	&\phantom{=} \left. \ + \left( (2N+1)-\floor{\frac{2N+1}{k}}k \right) \right)/(2N+1) \label{eq:boundby1}\\
	&= \frac{k}{2N+1}\floor{\frac{2N+1}{k}} \max_{j \in I} \max_{n \in w_j} \E{W_2^2(Y_n,\hat{Y}_n)} \nonumber \\
	&\phantom{=} \ + 1 - \frac{k}{2N+1}\floor{\frac{2N+1}{k}},
\end{align}
where~\eqref{eq:boundby1} holds because $\E{W_2^2(Y_n,\hat{Y}_n)} \leq 1$ for all $n$.

Fix a $j \in I$ and $n \in w_j$. We start with the expectation 
\begin{align}
	\E{W_2^2(Y_n,\hat{Y}_n)} &= \frac{d^2_{\max}}{2} \cdot \left( \sum_{i = 1}^A \mathbb{E} \left[ \left| \sum_{m = -\infty}^\infty q_\sigma(m-n) \mathbbm{1}_{X_m = i} \right. \right. \right. \nonumber \\
    &\phantom{=} \left.\left.\left. - \sum_{m = -\infty}^\infty q_\sigma(m-n) \mathbbm{1}_{\hat{X}_m = i} \right| \right] \right).
\end{align}
For each $i = 1,2,\ldots,A$, consider the summand term
\begin{align}
	&\phantom{=} \E{\abs{\sum_{m} q_\sigma(m-n) \mathbbm{1}_{X_m = i} - \sum_{m} q_\sigma(m-n) \mathbbm{1}_{\hat{X}_m = i}}} \\
	&= \E{\abs{\sum_{m = -\infty}^\infty q_\sigma(m-n) \left(\mathbbm{1}_{X_m = i} - \mathbbm{1}_{\hat{X}_m = i}\right)}} \\
	&= \E{\abs{\sum_{\ell' = -\infty}^\infty \sum_{m \in w_{\ell'}} q_\sigma(m-n) \left(\mathbbm{1}_{X_m = i} - \mathbbm{1}_{\hat{X}_m = i}\right)}} \\
	&= \frac{\exp\left(1/\sigma\right)-1}{\exp\left(1/\sigma\right)+1} \times \\
    &\phantom{=}\ \E{\abs{\sum_{\ell' = -\infty}^\infty \sum_{m \in w_{\ell'}} \exp\left(-\frac{\abs{m-n}}{\sigma}\right) \left(\mathbbm{1}_{X_m = i} - \mathbbm{1}_{\hat{X}_m = i}\right)}}. \nonumber 
 \end{align}
Break the summation into $\ell' > j, \ell' < j$ and $\ell' = j$, and let $\ell = \ell' - j$. Let $C'$ denote the number of indices to the left of $n$ in the same window $w_j$, i.e., $C' = (n+C)\mod k$. We can further write
\begin{align}
	&= \frac{\exp\left(1/\sigma\right)-1}{\exp\left(1/\sigma\right)+1} \times \nonumber \\
    &\phantom{=}\ \mathbb{E} \left[ \left| \sum_{\ell = 1}^\infty \exp\left(\frac{C'}{\sigma}-\frac{\ell k}{\sigma}\right) \right. \right. \nonumber \\
    &\phantom{= \mathbb{E}} \quad \left( \sum_{m = 0}^{k-1} \exp\left(-\frac{\abs{m}}{\sigma}\right) \left(\mathbbm{1}_{X_{\ell k-C'+m} = i} - \mathbbm{1}_{\hat{X}_{\ell k-C'+m} = i}\right) \right) \nonumber \\
	&\phantom{= \mathbb{E}} \ + \sum_{\ell = 1}^\infty \exp\left(\frac{k-C'-1}{\sigma}-\frac{\ell k}{\sigma}\right) \nonumber \\
    &\phantom{= \mathbb{E}} \quad \left( \sum_{m = 0}^{k-1} \exp\left(-\frac{\abs{m}}{\sigma}\right) \left(\mathbbm{1}_{X_{-C'-1-(\ell-1)k-m} = i} \right. \right. \nonumber \\
    &\phantom{= \mathbb{E}} \left. \left. \quad - \mathbbm{1}_{\hat{X}_{-C'-1-(\ell-1)k-m} = i}\right) \right) \nonumber \\
	&\phantom{= \mathbb{E}} \left. \left. \ + \sum_{m = -C'}^{k-C'-1} \exp\left(-\frac{\abs{m}}{\sigma}\right) \left(\mathbbm{1}_{X_m = i} - \mathbbm{1}_{\hat{X}_m = i}\right) \right|\right]. \label{eq:windowssum}
\end{align}
For $\ell = 1,2,\ldots$, define
\begin{align}
	\mathcal{W}_\ell &= \sum_{m = 0}^{k-1} \exp\left(-\frac{\abs{m}}{\sigma}\right) \left(\mathbbm{1}_{X_{\ell k-C'+m} = i} - \mathbbm{1}_{\hat{X}_{\ell k-C'+m} = i}\right), \\
	\mathcal{W}_{-\ell} &= \sum_{m = 0}^{k-1} \exp\left(-\frac{\abs{m}}{\sigma}\right) \left(\mathbbm{1}_{X_{-C'-1-(\ell-1)k-m} = i} \right. \nonumber \\
    &\phantom{-} \left.- \mathbbm{1}_{\hat{X}_{-C'-1-(\ell-1)k-m} = i}\right),
\end{align}
and define
\begin{equation}
	\mathcal{W}_0 = \sum_{m = -C'}^{k-C'-1} \exp\left(-\frac{\abs{m}}{\sigma}\right) \left(\mathbbm{1}_{X_m = i} - \mathbbm{1}_{\hat{X}_m = i}\right).
\end{equation}
Notice that $\{\mathcal{W}_j\}_{j = -\infty}^\infty$ are independent, and $\{\mathcal{W}_j\}_{j \neq 0}$ are i.i.d. 

Since by definition,
\begin{equation}
	\sum_{m = 0}^{k-1} \mathbbm{1}_{X_{\ell k-C'+m} = i} - \mathbbm{1}_{\hat{X}_{\ell k-C'+m} = i} = 0,
\end{equation}
we see that for $\ell = 1,2,\ldots$, for some $-|m|/\sigma \leq \xi_m \leq 0$ for each $m = 0,1,\ldots,k-1$,
\begin{align}
	\mathcal{W}_\ell &= \sum_{m = 0}^{k-1} \left(\exp\left(-\frac{\abs{m}}{\sigma} \right) - 1 \right)  \times \nonumber \\
	&\phantom{=} \quad \left(\mathbbm{1}_{X_{\ell k-C'+m} = i} - \mathbbm{1}_{\hat{X}_{\ell k-C'+m} = i}\right) \\
	&= \sum_{m = 0}^{k-1} \left(-\frac{\abs{m}}{\sigma} + \frac{\exp\left(\xi_m\right)}{2}\frac{\abs{m}^2}{\sigma^2}\right) \times \nonumber \\
	&\phantom{=} \quad \left(\mathbbm{1}_{X_{\ell k-C'+m} = i} - \mathbbm{1}_{\hat{X}_{\ell k-C'+m} = i}\right).
\end{align}
Similarly,
\begin{align}
	\mathcal{W}_{-\ell} &= \sum_{m = 0}^{k-1} \left(-\frac{\abs{m}}{\sigma} + \frac{\exp\left(\xi_m\right)}{2}\frac{\abs{m}^2}{\sigma^2}\right) \times \nonumber \\
	&\phantom{=} \quad \left(\mathbbm{1}_{X_{-C'-1-(\ell-1)k-m} = i} - \mathbbm{1}_{\hat{X}_{-C'-1-(\ell-1)k-m} = i}\right).
\end{align}
And for $\ell = 0$,
\begin{align}
	\mathcal{W}_{0} &= \sum_{m = -C'}^{k-C'-1} \left(-\frac{\abs{m}}{\sigma} + \frac{\exp\left(\xi_m\right)}{2}\frac{\abs{m}^2}{\sigma^2}\right) \times \nonumber \\
	&\phantom{=} \quad \left(\mathbbm{1}_{X_m = i} - \mathbbm{1}_{\hat{X}_m = i}\right).
\end{align}

For any integer $j$, a direct calculation reveals that all summands within $w_j$ are negatively correlated. Also, $\mathrm{Var}\left(\mathbbm{1}_{X_m = i} - \mathbbm{1}_{\hat{X}_m = i}\right) \leq 1$ for all integers $m$. Thus, for all $j = \pm1,\pm2,\ldots$,
\begin{align}
	\mathrm{Var} \left( \mathcal{W}_j\right)  &\leq \sum_{m = 0}^{k-1} \left(\frac{\abs{m}}{\sigma} - \frac{\exp\left(\xi_m\right)}{2}\frac{\abs{m}^2}{\sigma^2}\right)^2 \\
	&\leq \frac{k^3}{\sigma^2} \left( 1 - \frac{k}{\sigma} + \frac{k^2}{\sigma^2} \right). \label{eq:varbound}
\end{align}
For $j = 0$, a similar bound holds:
\begin{align}
	\mathrm{Var} \left( \mathcal{W}_0\right) &\leq 2 \sum_{m = 0}^{k-1} \left(\frac{\abs{m}}{\sigma} - \frac{\exp\left(\xi_m\right)}{2}\frac{\abs{m}^2}{\sigma^2}\right)^2 \\
	&\leq 2\frac{k^3}{\sigma^2} \left( 1 - \frac{k}{\sigma} + \frac{k^2}{\sigma^2} \right).
\end{align}

We now have
\begin{align}
	&\phantom{=} \E{\abs{\sum_{m} q_\sigma(m-n) \mathbbm{1}_{X_m = i} - \sum_{m} q_\sigma(m-n) \mathbbm{1}_{\hat{X}_m = i}}} \\
	&= \frac{\exp\left(1/\sigma\right)-1}{\exp\left(1/\sigma\right)+1} \mathbb{E} \left[ \left| \sum_{\ell = 1}^\infty \exp\left(\frac{C'}{\sigma}-\frac{\ell k}{\sigma}\right) \mathcal{W}_\ell \right. \right. \nonumber \\
	&\phantom{= \mathbb{E}} \left. \left.  + \sum_{\ell = 1}^\infty \exp\left(\frac{k-C'-1}{\sigma}-\frac{\ell k}{\sigma}\right) \mathcal{W}_{-\ell} + \mathcal{W}_0 \right|\right] \\
	&\leq \frac{\exp\left(1/\sigma\right)-1}{\exp\left(1/\sigma\right)+1} \left(\sum_{\ell = 1}^\infty \exp\left(\frac{2C'}{\sigma}-\frac{2\ell k}{\sigma}\right) \mathrm{Var}(\mathcal{W}_\ell)\right. \nonumber \\
	&\phantom{= \mathbb{E}} \left. + \sum_{\ell = 1}^\infty \exp\left(\frac{k-C'-1}{\sigma}-\frac{\ell k}{\sigma}\right) \mathrm{Var}(\mathcal{W}_{-\ell}) + \mathrm{Var}(\mathcal{W}_0) \right)^{1/2} \\
	&\leq \frac{\exp\left(1/\sigma\right)-1}{\exp\left(1/\sigma\right)+1} \left[ \left(\sum_{\ell = 1}^\infty \left(\exp\left(\frac{2C'}{\sigma}-\frac{2\ell k}{\sigma}\right) \right. \right. \right. \nonumber \\
	&\phantom{= \mathbb{E}} \left. \left. \left. + \exp\left(\frac{2(k-C'-1)}{\sigma}-\frac{2\ell k}{\sigma}\right) \right)+ 2 \right) \times \right. \nonumber \\
	&\phantom{= \mathbb{E}} \left. \frac{k^3}{\sigma^2} \left( 1 - \frac{k}{\sigma} + \frac{k^2}{\sigma^2} \right) \right]^{1/2} \\
	&= \frac{\exp\left(1/\sigma\right)-1}{\exp\left(1/\sigma\right)+1} \left[ \left( \left(\exp\left(\frac{2C'}{\sigma}\right) + \exp\left(\frac{2(k-C'-1)}{\sigma}\right)\right) \right. \right. \nonumber \\
	&\phantom{= \mathbb{E}} \left. \left. \frac{1}{e^{2k/\sigma}-1} + 2 \right) \frac{k^3}{\sigma^2} \left( 1 - \frac{k}{\sigma} + \frac{k^2}{\sigma^2} \right) \right]^{1/2}.
\end{align}

Maximizing over $n \in w_j$, which is equivalent to maximizing over $0 \leq C' \leq k-1$, we see that
\begin{align}
	&\phantom{=} \max_{n \in w_j} \E{\abs{\sum_{m} q_\sigma(m-n) \mathbbm{1}_{X_m = i} - \sum_{m} q_\sigma(m-n) \mathbbm{1}_{\hat{X}_m = i}}} \\
	&\leq \max_{0 \leq C' \leq k-1} \frac{\exp\left(1/\sigma\right)-1}{\exp\left(1/\sigma\right)+1} \times \nonumber \\ 
	&\phantom{\leq} \ \left[ \left( \frac{e^{2C'/\sigma} + e^{2(k-C'-1)/\sigma}}{e^{2k/\sigma}-1} + 2 \right) \frac{k^3}{\sigma^2} \left( 1 - \frac{k}{\sigma} + \frac{k^2}{\sigma^2} \right) \right]^{1/2} \\
	&= \frac{\exp\left(1/\sigma\right)-1}{\exp\left(1/\sigma\right)+1} \times \nonumber \\
	&\phantom{\leq} \ \left[ \left( \frac{1 + e^{2(k-1)/\sigma}}{e^{2k/\sigma}-1} + 2 \right) \frac{k^3}{\sigma^2} \left( 1 - \frac{k}{\sigma} + \frac{k^2}{\sigma^2} \right) \right]^{1/2}. \label{eq:boundforwj}
\end{align}

We see the bound~\eqref{eq:boundforwj} holds for all $j \in I$. We conclude that
\begin{align}
	&\phantom{=} \lim_{N \to \infty} \E{D_{\mathcal{d}_{\max}}} \nonumber \\
	&\leq \lim_{N \to \infty} \frac{k}{2N+1}\floor{\frac{2N+1}{k}} \max_{j \in I} \max_{n \in w_j} \E{W_2^2(Y_n,\hat{Y}_n)} \nonumber \\
	&\phantom{=} \ + 1 - \lim_{N \to \infty} \frac{k}{2N+1}\floor{\frac{2N+1}{k}} \\
	&= \max_{j \in I} \max_{n \in w_j} \E{W_2^2(Y_n,\hat{Y}_n)} \\
	&\leq \frac{\exp\left(1/\sigma\right)-1}{\exp\left(1/\sigma\right)+1} \times \nonumber \\
	&\phantom{\leq} \ \left[ \left( \frac{1 + e^{2(k-1)/\sigma}}{e^{2k/\sigma}-1} + 2 \right) \frac{k^3}{\sigma^2} \left( 1 - \frac{k}{\sigma} + \frac{k^2}{\sigma^2} \right) \right]^{1/2}. \label{eq:rdmpermutebound}
\end{align}

Similar to last subsection, this is an upper bound for $\E{D_{\mathcal{d}_{\max}}}$; using the sandwich argument~\ref{sec:sandwich}, the same upper bound applies to $\E{D}$. Recall that $k = \left[\sigma^\gamma\right]$ for some $0 < \gamma < 1$, and let $\sigma \to \infty$, the right hand side of~\eqref{eq:rdmpermutebound} goes to $0$ as $1/\sigma^{3/2-\gamma}$. We conclude that $(\alpha,\beta)$ is asymptotically achievable if $-1 < \alpha < 0, \alpha + \beta > -3/2$.
\end{proof}

\subsection{Converse -- Lower Bound}

Denote $W_{2,\mathcal{d}_{\min}}^2$ by $W_2^2$ in this subsection.

\begin{thm}
%For $\epsilon > 0$, $\E{D} \geq O\left(1/\sqrt{\sigma}\right)$; i.e. as $\sigma \to \infty$,
%\begin{align}
%& R < \frac{1}{\sigma^2} \\ \nonumber
%\implies & \E{D} \geq O\left(1/\sqrt{\sigma}\right).
%\end{align}
	The pair $(\alpha,\beta)$ is not asymptotically achievable if $\alpha < -2$ and $\beta = -1/2$.
\end{thm}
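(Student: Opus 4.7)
The plan is to use the $\mathcal{d}_{\min}$ sandwich argument from Section~\ref{sec:sandwich}, so it suffices to lower-bound $\mathbb{E}[D_{\mathcal{d}_{\min}}]$. Under this metric,
\[
W_{2,\mathcal{d}_{\min}}^2(Y_n, \hat{Y}_n) = \frac{d_{\min}^2}{2}\sum_{i=1}^A |A_{n,i} - \hat{A}_{n,i}|,
\]
where $A_{n,i} = \sum_m q_\sigma(m-n)\mathbbm{1}_{X_m = i}$ and $\hat{A}_{n,i}$ is defined analogously for $\seq{\hat{X}}$. By stationarity of $\seq{X}$, it suffices to analyze $\mathbb{E}[|A_{0,i} - \hat{A}_{0,i}|]$ for fixed $i$ and the single position $n=0$, since edge effects vanish as $N \to \infty$.

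The first step is to compute the intrinsic source variance at a single position. Since $\{\mathbbm{1}_{X_m = i}\}$ are i.i.d.\ Bernoulli$(\mathcal{p}_i)$ and independent of the weights, direct calculation using the two-sided geometric PMF gives $\mathrm{Var}(A_{0,i}) = \mathcal{p}_i(1-\mathcal{p}_i)\sum_k q_\sigma^2(k) = \Theta(1/\sigma)$ for large $\sigma$. This is the unavoidable fluctuation that the reconstruction must track.

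The second step uses the rate constraint to limit how well $\hat{A}_{0,i}$ can track $A_{0,i}$. Since $\hat{A}_{0,i}$ is measurable with respect to a message $M$ with $H(M) \leq \lceil (2N+1)R\rceil$, the orthogonality principle gives
\[
\mathbb{E}[(A_{0,i} - \hat{A}_{0,i})^2] \geq \mathrm{Var}(A_{0,i}) - \mathrm{Var}(\mathbb{E}[A_{0,i}|M]).
\]
To bound the last term, view $\{A_{n,i}\}_n$ as a stationary process with spectrum $\mathcal{p}_i(1-\mathcal{p}_i)|\hat{q}_\sigma(\omega)|^2$, which concentrates in a band of width $\Theta(1/\sigma)$ about zero. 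Reverse water-filling then caps the per-sample variance reduction by a quantity that, for $R \leq \sigma^\alpha$ with $\alpha < -2$, is strictly smaller than $\mathrm{Var}(A_{0,i})$; in particular, $\mathbb{E}[(A_{0,i} - \hat{A}_{0,i})^2] \geq (1-o(1))\mathrm{Var}(A_{0,i}) = \Omega(1/\sigma)$.

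The third step converts this $L^2$ bound into the desired $L^1$ bound via an anti-concentration argument. Since $A_{0,i}$ is a weighted sum of bounded i.i.d.\ random variables with square-summable weights of order $1/\sqrt{\sigma}$, a Berry-Esseen-type argument shows that its density is bounded above by $O(\sqrt{\sigma})$ throughout. The residual $A_{0,i} - \hat{A}_{0,i}$ inherits enough anti-concentration, even after conditioning on $M$, to yield $\mathbb{E}[|A_{0,i} - \hat{A}_{0,i}|] \geq c\sqrt{\mathbb{E}[(A_{0,i} - \hat{A}_{0,i})^2]}$ for a universal $c > 0$. Summing the bound over $i$ and applying the sandwich to return to $\mathbb{E}[D]$ then gives $\mathbb{E}[D] \geq C/\sqrt{\sigma}$ with a constant $C>1$ whenever $\alpha < -2$, contradicting the target $\mathbb{E}[D] \leq \sigma^{-1/2}$.

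The main obstacle is the quantitative $L^2$-to-$L^1$ conversion with constants precise enough to push the lower bound past $\sigma^{-1/2}$: conditioning on the message $M$ can in principle concentrate the distribution of $A_{0,i}$, so one cannot simply invoke the marginal density bound on $A_{0,i}$. The fix is to combine anti-concentration of the marginal law of $A_{0,i}$ with the $L^2$ lower bound on the residual, which together control the tail of $|A_{0,i} - \hat{A}_{0,i}|$ and deliver the needed $\Omega(1/\sqrt{\sigma})$ lower bound on its expectation.
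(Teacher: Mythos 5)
Your approach is genuinely different from the paper's, and while the first two ingredients are sound in spirit, there is a gap at the step you yourself flag as the main obstacle, and the proposed fix does not close it.

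The paper's converse partitions $\{-N,\ldots,N\}$ into windows $w_j$ of length $k = \lfloor\sigma^{1+\eta}\rfloor$, uses a chain-rule count to show that when $R \leq \sigma^{-2-\epsilon}$ the mutual information $I(X_j^k;\hat{X}_j^k)$ is small for all but a vanishing fraction of windows, and then applies Pinsker to replace the true joint over such a window by the \emph{product} $p(X_j^k)p(\hat{X}_j^k)$ at a controlled TV cost. Under the product law, $\hat{X}_j^k$ is independent of $X_j^k$, so the Lyapunov CLT applied to the (marginally anti-concentrated) weighted sum $U_{ijn}$ gives a clean $\Omega(1/\sqrt{\sigma})$ lower bound on the $L^1$ error with no conditioning issue at all. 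The mutual-information budget is what justifies both the decorrelation and the anti-concentration simultaneously.

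Your route replaces this with an orthogonality-principle $L^2$ bound, reverse water-filling, and then an $L^2$-to-$L^1$ conversion, and the gap is in the last step. Marginal anti-concentration of $A_{0,i}$ does not control the conditional distribution of $A_{0,i}$ given $M$: a low-rate message can reveal $A_{0,i}$ to high precision on a sub-event and leave a heavy residual on the complement, which makes $\E{|A_{0,i}-\hat A_{0,i}|}$ much smaller than $\sqrt{\E{(A_{0,i}-\hat A_{0,i})^2}}$. Concretely, if $A$ is uniform on an interval of width $1/\sqrt{\sigma}$ and $M$ reveals $A$ exactly on the event $\{A < (1-\epsilon)/\sqrt{\sigma}\}$ with $\epsilon \sim \sigma^{-1/3}$, then $\E{(A-\hat A)^2} = \Theta(\epsilon^3/\sigma) = \Theta(1/\sigma)$ while $\E{|A-\hat A|} = \Theta(\epsilon^2/\sqrt{\sigma}) = \Theta(\sigma^{-7/6}) = o(\sigma^{-1/2})$; so there is no universal $c$ with $\E{|A-\hat A|} \geq c\sqrt{\E{(A-\hat A)^2}}$ for $M$-measurable $\hat A$. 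The rate bound does rule out such adversarial $M$ in the problem at hand, but your argument spends the rate only in the $L^2$ step and does not re-use it in the $L^1$ conversion, so the chain is broken exactly where it is needed. Separately, the reverse water-filling step is heuristic as stated: $\{A_{n,i}\}_n$ is a non-Gaussian moving-average of indicators, and the variance reduction $\mathrm{Var}(\E{A_{0,i}|M})$ is taken over an arbitrary message $M$ that compresses $X_{-N}^N$, not the $A$ process, so the spectral picture does not directly apply without a Gaussian comparison and a reduction that you would need to supply. Once you introduce a mechanism that couples the rate constraint to the $L^1$ error per window---which is essentially what the paper's mutual-information-plus-Pinsker argument does---the orthogonality and water-filling machinery becomes superfluous.
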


\begin{proof}
We prove that, for $\sigma$ large enough, if the rate $R \leq 1/\sigma^{2+\epsilon}$, then the expected distortion $\E{D} \geq O\left(1/\sqrt{\sigma}\right)$. Consider the same partition as in the permutation scheme, with length $k$ windows $w_0,w_1,w_{-1}, \ldots$, and the window of remainders $w_r$. Fix $\sigma$, let $k = \floor{\sigma^{1+\eta}}$ for some positive $\eta < \epsilon$, and choose the sequence of $N$ such that $\frac{k}{2N+1}\floor{\frac{2N+1}{k}} \to 1$ as $N \to \infty$. We use $I$ to denote the set of indices for the length $k$ windows that reside entirely within the $\{-N,\ldots,N\}$ block, with $|I| = \floor{(2N+1)/k}$. We can write
\begin{align}
	\E{D_{\mathcal{d}_{\min}}} &= \frac{1}{2N+1} \sum_{n = -N}^N \E{W_2^2(Y_n,\hat{Y}_n)} \\
	&= \frac{1}{2N+1} \left( \sum_{j \in I} \sum_{n \in w_j} \E{W_2^2(Y_n,\hat{Y}_n)} \right. \nonumber \\
	&\phantom{=} \left. + \sum_{n \in w_r} \E{W_2^2(Y_n,\hat{Y}_n)} \right) \\
	&\geq \frac{1}{2N+1} \sum_{j \in I} \sum_{n \in W_j} \E{W_2^2(Y_n,\hat{Y}_n)},
\end{align}
where $W_j$ contains all indices in window $w_j$ that are at least $k/\sigma^{\eta/2}$ away from the boundary of the window. In other words, we neglect an $1/\sigma^{\eta/2}$ fraction of indices to both ends for each window. Notice that this proportion vanishes as $\sigma \to \infty$.

Let $X_j^k$ (resp., $\hat{X}_j^k$) denote all variables within window $w_j$, and $x_j^k$ (resp., $\hat{x}_j^k$) denote a particular realization of them. Define
\begin{equation}
	I_0 = \left\{j \in I: I(X_j^k;\hat{X}_j^k) \leq \frac{1}{\sigma^{1+(\epsilon-\eta)/2}} \right\},
\end{equation}
i.e., $I_0$ is the subset of $I$ which contains all windows such that the source and reconstruction restricted to that window have low mutual information. We can further bound
\begin{align}
	&\phantom{\geq} \E{D_{\mathcal{d}_{\min}}} \geq \frac{1}{2N+1} \sum_{j \in I_0} \sum_{n \in W_j} \E{W_2^2(Y_n,\hat{Y}_n)} \\
	&\geq \frac{1}{2N+1} \sum_{j \in I_0} \floor{k-\frac{2k}{\sigma^{\eta/2}}}\cdot \min_{n \in W_j} \E{W_2^2(Y_n,\hat{Y}_n)} \\
	&\geq \frac{\floor{k-\frac{2k}{\sigma^{\eta/2}}}}{2N+1} \cdot |I_0| \cdot \min_{j \in I_0} \min_{n \in W_j} \E{W_2^2(Y_n,\hat{Y}_n)}.
\end{align}

%	&\geq \frac{k}{2N+1} \floor{\frac{2N+1}{k}} \quad \frac{1}{\floor{(2N+1)/k}} \times \nonumber \\
%	&\phantom{\geq} \sum_{j \in I} \frac{1}{k} \sum_{n \in W_j} \min_{n \in W_j} \E{W_2^2(Y_n,\hat{Y}_n)} \\
%	&= \frac{k}{2N+1} \floor{\frac{2N+1}{k}} \frac{1}{\floor{(2N+1)/k}} \times \nonumber \\
%	&\phantom{\geq} \quad \sum_{j \in I} \frac{\floor{k(1-2/\sigma^{\eta/2})}}{k} \min_{n \in W_j} \E{W_2^2(Y_n,\hat{Y}_n)},

Suppose $j \in I_0$ and $n \in W_j$. Then,
\begin{align}
	&\phantom{=} \E{W_2^2(Y_n,\hat{Y}_n)} \nonumber \\
    &= \frac{d^2_{\min}}{2}\cdot \sum_{i = 1}^A \mathbb{E} \left[ \left| \sum_{m = -\infty}^\infty q_\sigma(m-n) \mathbbm{1}_{X_m = i} \right. \right. \nonumber \\
	&\phantom{=} \left. \left. - \sum_{m = -\infty}^\infty q_\sigma(m-n) \mathbbm{1}_{\hat{X}_m = i} \right| \right] \\
	&= \frac{d^2_{\min}}{2}\cdot \sum_{i = 1}^A \mathbb{E} \left[ \left| \sum_{m  \in w_j} q_\sigma(m-n) \left(\mathbbm{1}_{X_m = i} - \mathbbm{1}_{\hat{X}_m = i} \right) \right.\right. \nonumber \\
    &\phantom{=} \left.\left. + \sum_{m  \notin w_j} q_\sigma(m-n) \left(\mathbbm{1}_{X_m = i} - \mathbbm{1}_{\hat{X}_m = i} \right) \right| \right] \\
	&\geq \frac{d^2_{\min}}{2}\cdot \sum_{i = 1}^A \left(\E{\abs{\sum_{m  \in w_j} q_\sigma(m-n) \left(\mathbbm{1}_{X_m = i} - \mathbbm{1}_{\hat{X}_m = i}\right) }} \right. \nonumber \\
	&\quad \left. - \E{\abs{\sum_{m \notin w_j} q_\sigma(m-n) \left(\mathbbm{1}_{X_m = i}- \mathbbm{1}_{\hat{X}_m = i}\right)}}\right). \label{eq:converseexp}
\end{align}

Consider the first term in~\eqref{eq:converseexp}, and fix an $i = 1,2,\ldots,A$. We can write
\begin{align}
	&\phantom{=} \E{\abs{\sum_{m  \in w_j} q_\sigma(m-n) \left(\mathbbm{1}_{X_m = i} - \mathbbm{1}_{\hat{X}_m = i}\right) }} \nonumber \\
    &= \sum_{x_j^k,\hat{x}_j^k} d\left(x_j^k,\hat{x}_j^k\right) p\left(x_j^k,\hat{x}_j^k\right), \\
\intertext{where $d\left(x_j^k,\hat{x}_j^k\right) = \abs{\sum_{m \in w_j} q_\sigma(m-n) \left(\mathbbm{1}_{x_m = i} - \mathbbm{1}_{\hat{x}_m = i}\right)}$, and $p(x_j^k,\hat{x}_j^k)$ is the joint distribution between the source and reconstruction over $w_j$. We can further write}
	&= \sum_{x_j^k,\hat{x}_j^k} d\left(x_j^k,\hat{x}_j^k\right) \left[p\left(x_j^k,\hat{x}_j^k\right) \right. \\
    &\phantom{=} \left. + p\left(x_j^k\right)p\left(\hat{x}_j^k\right) - p\left(x_j^k\right)p\left(\hat{x}_j^k\right) \right] \nonumber \\
	&= \sum_{x_j^k,\hat{x}_j^k} d\left(x_j^k,\hat{x}_j^k\right) p\left(x_j^k\right)p\left(\hat{x}_j^k\right) \\ 
	&\quad + \sum_{x_j^k,\hat{x}_j^k} d\left(x_j^k,\hat{x}_j^k\right) \left(p\left(x_j^k,\hat{x}_j^k\right) - p\left(x_j^k\right)p\left(\hat{x}_j^k\right) \right) \nonumber \\
	&\geq \sum_{x_j^k,\hat{x}_j^k} d\left(x_j^k,\hat{x}_j^k\right) p\left(x_j^k\right)p\left(\hat{x}_j^k\right) \label{eq:lessthanone} \\
	&\quad - \sum_{x_j^k,\hat{x}_j^k} \abs{p\left(x_j^k,\hat{x}_j^k\right) - p\left(x_j^k\right)p\left(\hat{x}_j^k\right)} \nonumber \\
	&= \EE{\Perp}{\abs{\sum_{m \in w_j} q_\sigma(m-n) \left(\mathbbm{1}_{X_m = i} - \mathbbm{1}_{\hat{X}_m = i}\right)}} \label{eq:centerwindowexp} \\
	&\quad - \sum_{x_j^k,\hat{x}_j^k} \abs{p\left(x_j^k,\hat{x}_j^k\right) - p\left(x_j^k\right)p\left(\hat{x}_j^k\right)}, \nonumber
\end{align}
where $\EE{\Perp}{\cdot}$ denotes the expectation assuming $X_j^k$ and $\hat{X}_j^k$ are independent of each other, and \eqref{eq:lessthanone} used the fact that $d\left(x_j^k,\hat{x}_j^k\right) \leq 1$ for all realizations. We see that 
\begin{align}
	&\phantom{\geq} \min_{j \in I_0} \min_{n \in W_j} \E{W_2^2(Y_n,\hat{Y}_n)} \nonumber \\
	&\geq \min_{j \in I_0} \min_{n \in W_j} \frac{d^2_{\min}}{2} \cdot \sum_{i = 1}^A \\
	&\ \left( \EE{\Perp}{\abs{\sum_{m \in w_j} q_\sigma(m-n) \left(\mathbbm{1}_{X_m = i} - \mathbbm{1}_{\hat{X}_m = i}\right)}} \right. \nonumber \\ 
	&\ - \sum_{x_j^k,\hat{x}_j^k} \abs{p\left(x_j^k,\hat{x}_j^k\right) - p\left(x_j^k\right)p\left(\hat{x}_j^k\right)} \nonumber \\
	&\ \left. - \E{\abs{\sum_{m \notin w_j} q_\sigma(m-n) \left(\mathbbm{1}_{X_m = i}- \mathbbm{1}_{\hat{X}_m = i}\right)}}\right) \nonumber \\
	&\geq \frac{d^2_{\min}}{2} \cdot \sum_{i = 1}^A \label{eq:sumwindowexp} \\
	&\ \left( \min_{j \in I_0} \min_{n \in W_j} \EE{\Perp}{\abs{\sum_{m \in w_j} q_\sigma(m-n) \left(\mathbbm{1}_{X_m = i} - \mathbbm{1}_{\hat{X}_m = i}\right)}} \right. \nonumber \\ 
	&\ - \max_{j \in I_0} \max_{n \in W_j} \sum_{x_j^k,\hat{x}_j^k} \abs{p\left(x_j^k,\hat{x}_j^k\right) - p\left(x_j^k\right)p\left(\hat{x}_j^k\right)} \nonumber \\
	&\ \left. - \max_{j \in I_0} \max_{n \in W_j} \E{\abs{\sum_{m \notin w_j} q_\sigma(m-n) \left(\mathbbm{1}_{X_m = i}- \mathbbm{1}_{\hat{X}_m = i}\right)}}\right). \nonumber 
\end{align}

Now, fix an $i = 1,2,\ldots,A$, and consider the three terms in~\eqref{eq:sumwindowexp}:
\begin{align}
	&\phantom{-}\min_{j \in I_0} \min_{n \in W_j} \EE{\Perp}{\abs{\sum_{m \in w_j} q_\sigma(m-n) \left(\mathbbm{1}_{X_m = i} - \mathbbm{1}_{\hat{X}_m = i}\right)}} \nonumber \\ 
	&- \max_{j \in I_0} \max_{n \in W_j} \sum_{x_j^k,\hat{x}_j^k} \abs{p\left(x_j^k,\hat{x}_j^k\right) - p\left(x_j^k\right)p\left(\hat{x}_j^k\right)} \label{eq:windowexp} \\
	&- \max_{j \in I_0} \max_{n \in W_j} \E{\abs{\sum_{m \notin w_j} q_\sigma(m-n) \left(\mathbbm{1}_{X_m = i}- \mathbbm{1}_{\hat{X}_m = i}\right)}}. \nonumber
\end{align}

For the first term in~\eqref{eq:windowexp}, define
\begin{align}
U_{ijn} &= \sum_{m  \in w_j} q_\sigma(m-n) \mathbbm{1}_{X_m = i}, \\ 
\hat{U}_{ijn} &= \sum_{m \in w_j} q_\sigma(m-n) \mathbbm{1}_{\hat{X}_m = i}, \\
Q_{ijn} &= \sqrt{\mathcal{p}_i\left(1-\mathcal{p}_i\right) \cdot \sum_{m \in w_j} q_\sigma(m-n)^2}.
\end{align}
We can bound the expectation by
\begin{align}
	&\phantom{=} \EE{\Perp}{\abs{\sum_{m \in w_j} q_\sigma(m-n) \left(\mathbbm{1}_{X_m = i} - \mathbbm{1}_{\hat{X}_m = i}\right)}} \\
	&= \EE{\Perp}{\abs{U_{ijn} - \hat{U}_{ijn}}} \\
	&\geq \EE{\Perp}{\abs{U_{ijn} - \hat{U}_{ijn}} \cdot \mathbbm{1}_{U_{ijn} > \E{U_{ijn}} + Q_{ijn} \wedge \hat{U}_{ijn} \leq \E{U_{ijn}}}} \nonumber \\
	&\phantom{\geq} \ + \EE{\Perp}{\abs{U_{ijn} - \hat{U}_{ijn}} \cdot \mathbbm{1}_{U_{ijn} \leq \E{U_{ijn}} - Q_{ijn} \wedge \hat{U}_{ijn} > \E{U_{ijn}}}} \\
	&\geq Q_{ijn} \cdot \mathbb{P}\left( U_{ijn} > \E{U_{ijn}} + Q_{ijn}\right) \cdot \mathbb{P}\left(\hat{U}_{ijn} \leq \E{U_{ijn}}\right) \nonumber \\
	&\phantom{\geq} \ + Q_{ijn} \cdot \mathbb{P}\left( U_{ijn} \leq \E{U_{ijn}} - Q_{ijn} \right) \cdot \mathbb{P}\left(\hat{U}_{ijn} > \E{U_{ijn}} \right). \label{eq:sourcereconindep}
\end{align}
We can further bound $Q_{ijn}$ by
\begin{align}
	Q_{ijn} &\geq \sqrt{p_0 (1-p_0) \sum_{m = 0}^{k-1} q_\sigma(m)^2} \\
	&= \sqrt{p_0 (1-p_0) (e^{1/\sigma}-1) \frac{e^{2/\sigma} \left( 1 - e^{-2k/\sigma} \right)}{\left( e^{1/\sigma}+1 \right)^3}},
\end{align}
where $p_0 = \arg\min_{\mathcal{p}_i, i = 1,2,\ldots,A} \mathcal{p}_i (1 - \mathcal{p}_i)$. Now,
\begin{align}
	&\phantom{\geq} \min_{j \in I_0} \min_{n \in W_j} \EE{\Perp}{\abs{\sum_{m \in w_j} q_\sigma(m-n) \left(\mathbbm{1}_{X_m = i} - \mathbbm{1}_{\hat{X}_m = i}\right)}} \nonumber \\
	&\geq \sqrt{p_0 (1-p_0) (e^{1/\sigma}-1) \frac{e^{2/\sigma} \left( 1 - e^{-2k/\sigma} \right)}{\left( e^{1/\sigma}+1 \right)^3}} \times \nonumber \\
	&\phantom{\geq} \min_{j \in I_0} \min_{n \in W_j} \left( \mathbb{P}\left( U_{ijn} > \E{U_{ijn}} + Q_{ijn}\right) \cdot \mathbb{P}\left(\hat{U}_{ijn} \leq \E{U_{ijn}}\right) \right. \nonumber \\
	&\phantom{\geq} \left. + \mathbb{P}\left( U_{ijn} \leq \E{U_{ijn}} - Q_{ijn} \right) \cdot \mathbb{P}\left(\hat{U}_{ijn} > \E{U_{ijn}} \right) \right).
\end{align}
Consider the following variables
\begin{align}
	&\sum_{m  \in w_{j_k}} \frac{q_{\sigma}(m-n_k)}{\sqrt{\sum_{m \in w_{j_k}} q_{\sigma}(m-n_k)^2}} \left(\mathbbm{1}_{X_m = i} - \mathcal{p}_i\right) \\
	\text{and } &\sum_{m  \in w_{j_{k'}}} \frac{q_{\sigma}(m-n_{k'})}{\sqrt{\sum_{m \in w_{j_{k'}}} q_{\sigma}(m-n_{k'})^2}} \left(\mathbbm{1}_{X_m = i} - \mathcal{p}_i\right),
\end{align}
where for each $k$ (resp., $k'$), $j_k$ and $n_k$ (resp., $j_{k'}$ and $n_{k'}$) are the minimizer of $\mathbb{P}\left( U_{ijn} > \E{U_{ijn}} + Q_{ijn}\right)$ (resp., $\mathbb{P}\left( U_{ijn} \leq \E{U_{ijn}} - Q_{ijn} \right)$), i.e., for a fixed $k$, 
\begin{align}
	&\phantom{\leq} \mathbb{P}\left( U_{ij_kn_k} > \E{U_{ij_kn_k}} + Q_{ij_kn_k}\right) \\
	&\leq \mathbb{P}\left( U_{ijn} > \E{U_{ijn}} + Q_{ijn}\right) \text{ for all } j \in I_0, n \in W_j, \nonumber
\end{align}
and similarly for $k'$. By Lyapunov's Central Limit Theorem (CLT) \cite[Theorem 27.3]{billingsley2017probability},
\begin{align}
	&\phantom{\to} \sum_{m \in w_{j_k}} \frac{q_\sigma(m-n_k)}{\sqrt{\sum_{m \in w_{j_k}} q_\sigma(m-n_k)^2}} \left(\mathbbm{1}_{X_m = i} - \mathcal{p}_i\right) \nonumber \\
    &\overset{D}{\to} \mathcal{N}\left(0,\mathcal{p}_i\left(1-\mathcal{p}_i\right) \right) \text{ as } k \to \infty. \label{eq:lyapunov}
\end{align}
To validate the condition for Lyapunov's CLT~\cite[(27.16)]{billingsley2017probability}, choose $\delta = 2$. Recall that $k = \floor{\sigma^{1+\eta}}$. The sum of variances $s_k^2 = \mathcal{p}_i(1 - \mathcal{p}_i)$ for all $k$, so we only need to check $\sum_{m \in w_j} q_\sigma(m-n_k)^4/(\sum_{m \in w_j} q_\sigma(m-n_k)^2)^2 \to 0$ as $k \to \infty$. By direct computation, 
\begin{equation}
	\frac{\sum_{m \in w_j} q_\sigma(m-n_k)^4}{(\sum_{m \in w_j} q_\sigma(m-n_k)^2)^2} \leq \left(e^{1/\sigma} - 1\right).
\end{equation}
Thus we see that the ratio goes to $0$ as $k \to \infty$, i.e., the condition~\cite[(27.16)]{billingsley2017probability} is verified. Lyapunov's CLT holds for the variable with $k'$ via the same argument.~\eqref{eq:lyapunov} implies that for some constant $c' > 0$,
\begin{align}
	&\mathbb{P}\left( \abs{ \sum_{m \in w_{j_k}} \frac{q_\sigma(m-n_k)}{\sqrt{\sum_{m \in w_{j_k}} q_\sigma(m-n_k)^2}} \left(\mathbbm{1}_{X_m = i} - \mathcal{p}_i\right) } \right. \nonumber \\
    &\phantom{\mathbb{P}} \left. \geq \sqrt{\mathcal{p}_i \left( 1-\mathcal{p}_i \right)}\right) \to 2c' \text{ as } k \to \infty.
\end{align}
Hence, for some $K > 0$, and for all $k > K$, for some constant $c > 0$ close to $c'$,
\begin{align}
	&\mathbb{P}\left( \abs{ \sum_{m \in w_{j_k}} \frac{q_\sigma(m-n_k)}{\sqrt{\sum_{m \in w_{j_k}} q_\sigma(m-n_k)^2}} \left(\mathbbm{1}_{X_m = i} - \mathcal{p}_i\right)  } \right. \nonumber \\
    &\phantom{\mathbb{P}} \left. \geq \sqrt{\mathcal{p}_i \left( 1-\mathcal{p}_i \right) }\right) \geq 2c. \label{eq:clt}
\end{align}
Rearranging the terms, and utilizing the symmetry of the Gaussian distribution, we see that for large enough $k$,
\begin{equation}
	\mathbb{P}\left( U_{ij_kn_k} \geq \E{U_{ij_kn_k}} + Q_{ij_kn_k} \right) \geq c.
\end{equation}
Replacing $k$ with $k'$ and with the same argument, we see that for some $K' > 0$, and for all $k > K'$,
\begin{equation}
	\mathbb{P}\left( U_{ij_kn_k} \leq \E{U_{ij_kn_k}} - Q_{ij_kn_k} \right) \geq c.
\end{equation}
Recall that $j_k$ and $n_k$ are minimizer of $\mathbb{P}\left( U_{ijn} > \E{U_{ijn}} + Q_{ijn}\right)$ (resp., $j_{k'}$ and $n_{k'}$ for $\mathbb{P}\left( U_{ijn} \leq \E{U_{ijn}} - Q_{ijn} \right)$); we conclude that for large enough $k$, for some $c > 0$, 
\begin{align}
	&\min_{j \in I_0} \min_{n \in W_j} \mathbb{P}\left( U_{ijn} > \E{U_{ijn}} + Q_{ijn}\right) \geq c \\
	\text{ and } &\min_{j \in I_0} \min_{n \in W_j} \mathbb{P}\left( U_{ijn} \leq \E{U_{ijn}} - Q_{ijn}\right) \geq c.
\end{align}
We hence conclude that for large enough $\sigma$,
\begin{align}
	&\phantom{\geq} \min_{j \in I_0} \min_{n \in W_j} \EE{\Perp}{\abs{\sum_{m \in w_j} q_\sigma(m-n) \left(\mathbbm{1}_{X_m = i} - \mathbbm{1}_{\hat{X}_m = i}\right)}} \nonumber \\
	&\geq c \sqrt{p_0 (1-p_0) (e^{1/\sigma}-1) \frac{e^{2/\sigma} \left( 1 - e^{-2k/\sigma} \right)}{\left( e^{1/\sigma}+1 \right)^3}}.
\end{align}

%Therefore, for some $\Sigma > 0$ and for all $\sigma > \Sigma$,
%\begin{align}
%	&\phantom{\geq} \E{\abs{\sum_{m  \in w_0} q_\sigma(m) \left(\mathbbm{1}_{X_m = i} - \mathcal{p}_i\right)}} \nonumber \\
%    &\geq \frac{C}{2} \sqrt{\mathcal{p}_i\left(1-\mathcal{p}_i\right) \cdot \sum_{m  \in w_0} q_\sigma(m)^2}
%\end{align}
%for some constant $C > 0$.

%Then, by Khintchine Inequality \cite[Section \S 10.3, Theorem 1]{chow2003probability}, 
%\begin{equation}
%	\E{\abs{\sum_{i  \in w_0} Q_i \left(X_i - \frac{1}{2}\right)}}  \geq A \left(\sum_{i  \in w_0} \frac{Q_i}{2} ^2\right)^{1/2} = A' \left(\sum_{i  \in w_0} Q_i ^2\right)^{1/2}
%\end{equation}
%for some constant $A > 1/2$ that does not change with $\sigma$.

For the second term in~\eqref{eq:windowexp}, by Pinsker's Inequality \cite{pinsker1964information}, for any $j \in I_0$,
\begin{align}
	&\phantom{=} \sum_{x_j^k,\hat{x}_j^k} \abs{p\left(x_j^k,\hat{x}_j^k\right) - p\left(x_j^k\right)p\left(\hat{x}_j^k\right)} \nonumber \\
    &= 2D_{\mathrm{TV}}\left(p\left(X_j^k,\hat{X}_j^k\right),p\left(X_j^k\right)p\left(\hat{X}_j^k\right)\right) \\
	&\leq \sqrt{2 D_{\mathrm{KL}}\left(p\left(X_j^k,\hat{X}_j^k\right)\Vert p\left(X_j^k\right)p\left(\hat{X}_j^k\right)\right)} \\
	&= \sqrt{2I(X_j^k;\hat{X}_j^k)} \\
	&\leq \sqrt{2/\sigma^{1+(\epsilon-\eta)/2}};
\end{align}
hence we conclude
\begin{align}
	&\phantom{\leq} \max_{j \in I_0} \max_{n \in W_j} \sum_{x_j^k,\hat{x}_j^k} \abs{p\left(x_j^k,\hat{x}_j^k\right) - p\left(x_j^k\right)p\left(\hat{x}_j^k\right)} \nonumber \\ 
	&\leq \sqrt{2/\sigma^{1+(\epsilon-\eta)/2}}.
\end{align}

For the third term in~\eqref{eq:windowexp}, for any $j \in I_0$ and for any $n \in W_j$,
\begin{align}
	&\phantom{\leq} \E{\abs{\sum_{m \notin w_j} q_\sigma(m-n) \left(\mathbbm{1}_{X_m = i} - \mathbbm{1}_{\hat{X}_m = i}\right)}} \nonumber \\
    &\leq \sum_{m \notin w_j} q_\sigma(m-n) \leq e^{-\sigma^{\eta/2}}\frac{2e^{1/\sigma}}{e^{1/\sigma}+1}.
\end{align}
hence
\begin{align}
	&\phantom{\leq} \max_{j \in I_0} \max_{n \in W_j} \E{\abs{\sum_{m \notin w_j} q_\sigma(m-n) \left(\mathbbm{1}_{X_m = i} - \mathbbm{1}_{\hat{X}_m = i}\right)}} \nonumber \\
    &\leq e^{-\sigma^{\eta/2}}\frac{2e^{1/\sigma}}{e^{1/\sigma}+1}. \label{eq:constantoutsidewindow}
\end{align}

Combining all of the above, we see that for large $\sigma$,
\begin{align}
	&\phantom{\geq} \min_{j \in I_0} \min_{n \in W_j} \E{W_2^2(Y_n,\hat{Y}_n)} \nonumber \\
	&\geq \frac{d_{\min}^2}{2} A \times  \\
	&\phantom{\geq} \left(c \sqrt{p_0 (1-p_0) (e^{1/\sigma}-1) \frac{e^{2/\sigma} \left( 1 - e^{-2k/\sigma} \right)}{\left( e^{1/\sigma}+1 \right)^3}} \right. \nonumber \\
	&\phantom{\geq} \ \left. - \sqrt{\frac{2}{\sigma^{1+(\epsilon-\eta)/2}}} - e^{-\sigma^{\eta/2}}\frac{2e^{1/\sigma}}{e^{1/\sigma}+1} \right), \nonumber 
\end{align}
and hence
\begin{align}
	&\phantom{\geq} \E{D_{\mathcal{d}_{\min}}} \\
	&\geq \frac{\floor{k\left( 1-\frac{2}{\sigma^{\eta/2}} \right)} |I_0|}{2N+1}\frac{d_{\min}^2}{2}\cdot A \times \\
	&\phantom{\geq} \left(c \sqrt{p_0 (1-p_0) (e^{1/\sigma}-1) \frac{e^{2/\sigma} \left( 1 - e^{-2k/\sigma} \right)}{\left( e^{1/\sigma}+1 \right)^3}} \right. \nonumber \\
	&\phantom{\geq} \ \left. - \sqrt{\frac{2}{\sigma^{1+(\epsilon-\eta)/2}}} - e^{-\sigma^{\eta/2}}\frac{2e^{1/\sigma}}{e^{1/\sigma}+1} \right).
\end{align}

Now, for an arbitrary random variable $M$,
\begin{align}
	&\phantom{=} (2N+1)R = I\left(X_{-N}^N;M\right) \\
    &= I\left(X_{j_1}^k,X_{j_2}^k,\ldots,X_{j_{\floor{(2N+1)/k}}}^k,X_r^\cdot;M\right), \\
\intertext{where $I = \{j_1,j_2,\ldots,j_{\floor{(2N+1)/k}}\}$, and $X_r^\cdot$ denotes all variables in the window of remainders. We can further write}
	&= \sum_{\ell = 1}^{\floor{(2N+1)/k}} I\left(X_{j_\ell}^k;M|X_{j_1}^k,\ldots,X_{j_{\ell-1}}^\cdot\right) \nonumber \\
	&\phantom{=} \ + I(X_r^\cdot;M|X_{j_1}^k,\ldots,X_{j_\ell}^k) \\
	&\geq \sum_{\ell = 1}^{\floor{(2N+1)/k}} I\left(X_{j_\ell}^k;M\right).
 \end{align}
Plugging in that the rate $R = 1/\sigma^{2+\epsilon}$, we have
\begin{equation}
	\frac{k}{\sigma^{2+\epsilon}} = kR \geq \frac{1}{(2N+1)/k} \sum_{\ell = 1}^{\floor{(2N+1)/k}} I\left(X_{j_\ell}^k;\hat{X}_{j_\ell}^k\right).
%    &= \frac{1}{(2N+1)/k} \times \nonumber \\
%    &\phantom{=} \left(\sum_{\ell = 1}^{\floor{(2N+1)/k}} D_{\mathrm{KL}}\left(p\left(X_{j_\ell}^k,\hat{X}_{j_\ell}^k\right)\Vert p\left(X_{j_\ell}^k\right)p\left(\hat{X}_{j_\ell}^k\right)\right)\right).
\end{equation}
Recall that $k = \sigma^{1+\eta}$ for some $\eta \leq \epsilon$; thus, 
\begin{align}
	&\phantom{=} 1/\sigma^{1+\epsilon - \eta} \geq \frac{1}{(2N+1)/k} \sum_{j \in I} I\left(X_j^k;\hat{X}_j^k\right) \\
	&= \frac{k}{2N+1}\floor{\frac{2N+1}{k}} \frac{1}{\floor{(2N+1)/k}} \sum_{j \in I} I\left(X_j^k;\hat{X}_j^k\right),
\end{align}
which implies that
\begin{equation}
	\frac{1}{\floor{(2N+1)/k}} \sum_{j \in I} I\left(X_j^k;\hat{X}_j^k\right) \leq \frac{1}{\frac{k}{2N+1}\floor{\frac{2N+1}{k}}} \frac{1}{\sigma^{1+\epsilon - \eta}}. \label{eq:avemi} 
\end{equation}
We see the left hand side of~\eqref{eq:avemi} is the average mutual information of all length $k$ windows, and the right hand side diminishes to $0$ as $N, \sigma \to \infty$. Let $P_\sigma = 1 - |I_0|/|I|$, i.e., the proportion of windows who violate the order bound in the definition of $I_0$. As $N, \sigma \to \infty$, we see that $P_\sigma \to 0$ as $1/\sigma^{(\epsilon-\eta)/2}$, otherwise~\eqref{eq:avemi} would fail. Equivalently, $|I_0| = |I|(1-P_\sigma) = \floor{\frac{2N+1}{k}}(1-P_\sigma)$.

We conclude that
\begin{align}
	&\phantom{\geq} \E{D_{\mathcal{d}_{\min}}} \nonumber \\
	&\geq \frac{1}{2N+1} \floor{k\left( 1-\frac{2}{\sigma^{\eta/2}} \right)} \floor{\frac{2N+1}{k}} (1-P_\sigma) \frac{d_{\min}^2}{2}\cdot A \times \nonumber \\
	&\phantom{\geq} \left(c \sqrt{p_0 (1-p_0) (e^{1/\sigma}-1) \frac{e^{2/\sigma} \left( 1 - e^{-2k/\sigma} \right)}{\left( e^{1/\sigma}+1 \right)^3}} \right. \nonumber \\
	&\phantom{\geq} \ \left. - \sqrt{\frac{2}{\sigma^{1+(\epsilon-\eta)/2}}} - e^{-\sigma^{\eta/2}}\frac{2e^{1/\sigma}}{e^{1/\sigma}+1} \right), \\
\intertext{which implies}
	&\phantom{\geq} \lim_{N \to \infty} \E{D_{\mathcal{d}_{\min}}} \nonumber \\
	&\geq \frac{1}{k} \floor{k\left( 1-\frac{2}{\sigma^{\eta/2}} \right)} (1-P_\sigma) \frac{d_{\min}^2}{2}\cdot A \times \nonumber \\
	&\phantom{\geq} \left(c \sqrt{p_0 (1-p_0) (e^{1/\sigma}-1) \frac{e^{2/\sigma} \left( 1 - e^{-2k/\sigma} \right)}{\left( e^{1/\sigma}+1 \right)^3}} \right. \nonumber \\
	&\phantom{\geq} \ \left. - \sqrt{\frac{2}{\sigma^{1+(\epsilon-\eta)/2}}} - e^{-\sigma^{\eta/2}}\frac{2e^{1/\sigma}}{e^{1/\sigma}+1} \right). \label{eq:lowerbound}
\end{align}
Using the sandwich argument~\ref{sec:sandwich}, we conclude that the same lower bound applies to $\E{D}$. Let $N \to \infty$, recall that $k = \floor{\sigma^{1+\eta}}$ and let $\sigma \to \infty$, the right hand side of~\eqref{eq:lowerbound} behaves like $1/\sigma^{(1-\eta)/2}$. We conclude that $(\alpha,\beta)$ is not asymptotically achievable when $\alpha < -2, \beta \leq -1/2$.
\end{proof}

\subsection{Sandwich Argument}
\label{sec:sandwich}

We provide a generic sandwich argument that deals with the distortion measure:

Define $d_{\max} = \max\{d_{i,j}: i \neq j\}$ to be the largest off-diagonal entry in $\mathcal{d}$, and $d_{\min} = \min\{d_{i,j}: i \neq j\}$ to be the smallest off-diagonal entry in $\mathcal{d}$. Define 
\begin{equation}
	\mathcal{d}_{\max} = \begin{bmatrix} 0 & & d_{\max} \\ & \ddots & \\ d_{\max} & & 0 \end{bmatrix}
\end{equation}
and
\begin{equation}
	\mathcal{d}_{\min} = \begin{bmatrix} 0 & & d_{\min} \\ & \ddots & \\ d_{\min} & & 0 \end{bmatrix},
\end{equation}
i.e. $\mathcal{d}_{\max}$ (resp., $\mathcal{d}_{\min}$) is the distance matrix where all off-diagonal entries are replaced by $d_{\max}$ (resp., $d_{\min}$).
We can write
\begin{equation}
W_{2,\mathcal{d}}^2(Y_n,\hat{Y}_n) = \inf_{\pi \in \Pi(Y_n,\hat{Y}_n)} \sum_{i \neq j} d_{i,j}^2 \pi(i,j), \label{eq:jointdist}
\end{equation}
where $\Pi(Y_n,\hat{Y}_n)$ is the collection of all joint distributions such that the marginal distributions are $Y_n$ and $\hat{Y}_n$, respectively. Let $\pi'$ be the joint distribution that achieves $W_{2,\mathcal{d}}^2(Y_n,\hat{Y}_n)$, i.e., $\pi' = \arg\min_{\pi \in \Pi\left(Y_n,\hat{Y}_n\right)} \sum_{i,j} d_{i,j}^2 \pi(i,j)$; then, (\ref{eq:jointdist}) can be bounded by
\begin{align}
\inf_{\pi \in \Pi\left(Y_n,\hat{Y}_n\right)} \sum_{i \neq j} d_{\min}^2 \pi(i,j) &\leq \sum_{i \neq j} d_{\min}^2 \pi'(i,j)  \nonumber \\
\leq \sum_{i \neq j} d_{i,j}^2 \pi'(i,j) &\leq \sum_{i \neq j} d_{i,j}^2 \pi''(i,j) \\
\leq \sum_{i \neq j} d_{\max}^2 \pi''(i,j) &= \inf_{\pi \in \Pi\left(Y_n,\hat{Y}_n\right)} \sum_{i \neq j} d_{\max}^2 \pi(i,j), \nonumber
\end{align}
where $\pi'' = \arg\min_{\pi \in \Pi\left(Y_n,\hat{Y}_n\right)} \sum_{i,j} d_{\max}^2 \pi(i,j)$; i.e., $\pi''$ is the optimal coupling under $\mathcal{d}_{\max}$. We conclude that 
\begin{equation}
W_{2,\mathcal{d}_{\min}}^2(Y_0,\hat{Y}_0) \leq W_{2,\mathcal{d}}^2(Y_0,\hat{Y}_0) \leq W_{2,\mathcal{d}_{\max}}^2(Y_0,\hat{Y}_0),
\end{equation}
where we use an extra subscript to specify the corresponding underlying distortion metric to the Wasserstein distance. Notice that the constants are different ($d_{\min}$ and $d_{\max}$, respectively); in the achievability and converse proofs, the constants needs to be changed when applying the sandwich argument; however this does not change the order laws. Summing them up, we see
\begin{equation}
D_{\mathcal{d}_{\min}} \leq D \leq D_{\mathcal{d}_{\max}},
\end{equation}
where we use an extra subscript for Wasserstein distortion under the different metric.

%\section{Conclusion}
%
%placeholder

%%%%%%
%% Appendix:
%% If needed a single appendix is created by
%%
%\appendix
%%
%% If several appendices are needed, then the command
%%
% \appendices
%%
%% in combination with further \section commands can be used.
%%%%%%

\section*{Acknowledgment}

The authors wish to thank Johannes Ball\'{e} and Lucas Theis
for helpful discussions.
This research was supported by the US National Science Foundation under grant CCF-2306278 and a gift from Google.

%%%%%%
%% To balance the columns at the last page of the paper use this
%% command:
%%
%\enlargethispage{-1.2cm} 
%%
%% If the balancing should occur in the middle of the references, use
%% the following trigger:
%%
\IEEEtriggeratref{4}
%%
%% which triggers a \newpage (i.e., new column) just before the given
%% reference number. Note that you need to adapt this if you modify
%% the paper.  The "triggered" command can be changed if desired:
%%
%\IEEEtriggercmd{\enlargethispage{-20cm}}
%%
%%%%%%

\bibliographystyle{IEEEtran}
\bibliography{bib}

\newpage
\appendices

\section{Proofs for Section~\ref{sec:extreme}}
\label{app:proofs}

\begin{proof}[Proof of Theorem~\ref{thm:fidelity}]
Fix $K$ and $\epsilon$ as in \ref{TSG:tail}. Consider the coupling between $y_{0,\sigma}$ and $\hat{y}_{0,\sigma}$ suggested by the ordering of the sequences:
\begin{align}
D_{0,\sigma} & = \inf_{Z \sim y_{n,\sigma},\hat{Z} \sim \hat{y}_{n,\sigma}} E[d^p(Z,\hat{Z})] \nonumber \\
 & \le \sum_{k = -\infty}^\infty q_{\sigma}(k) d^p(z_k,\hat{z}_k).
\end{align}
We have
\begin{align}
\limsup_{\sigma \rightarrow 0} D_{0,\sigma}
   & \le \lim_{\sigma \rightarrow 0} q_\sigma(0) d^p(z_0,\hat{z}_0) \\
         \nonumber
   & \phantom{ = \lim } + \lim_{\sigma \rightarrow 0} \sum_{k: 0 < |k| \le K} q_\sigma(k) d^p(z_k,\hat{z}_k) \\
         \nonumber
   & \phantom{ = \lim } + \lim_{\sigma \rightarrow 0} \sum_{k: |k| > K} q_\sigma(k) d^p(z_k,\hat{z}_k) \\
      & = d^p(z_0,\hat{z}_0),
      \label{eq:fidelity:dom}
\end{align}
where (\ref{eq:fidelity:dom}) follows from \ref{TSG:delta} and \ref{TSG:cont} (for the first two limits) and from \ref{TSG:delta}-\ref{TSG:tail} and dominated convergence (for the third limit). For the reverse direction, fix $\sigma > 0$ and let $q_\sigma(\cdot,\cdot)$ denote any PMF over $\mathbb{Z}^2$, both of whose marginals are $q_\sigma(\cdot)$. Then we have
\begin{align}
&\phantom{\geq}\sum_{k_1 = - \infty}^\infty \sum_{k_2 = -\infty}^\infty q_\sigma(k_1,k_2) d^p(z_{k_1},\hat{z}_{k_2}) \nonumber \\
 & \ge q_{\sigma}(0,0) d^p(z_0,\hat{z}_0) \\
 & \ge (2q_{\sigma}(0)-1) d^p(z_0,\hat{z}_0),
\end{align}
from which the result follows by \ref{TSG:delta} and \ref{TSG:cont}.
\end{proof}

To prove Theorem~\ref{thm:realism}, we need a lemma first.

\begin{lem}[Equivalence of Ces\`{a}ro Sums]\label{infinitylemma}
Suppose $q$ satisfies \ref{TSG:symm}, \ref{TSG:mono}, and \ref{TSG:unif}. For any two-sided $\mathbb{R}$-valued sequence $\seq{a}$, if  
\begin{equation}
	\lim_{m \to \infty} \frac{1}{2m+1} \sum_{k = -m}^m a_k = \alpha \in \mathbb{R},
\end{equation}
%\begin{equation}
%	\sum_{\ell = -\infty}^\infty a_\ell < \infty;
%\end{equation}
 and for all $\sigma > 0$,
 \begin{equation}
 \label{eq:cesaro:summable}
 	\sum_{k = -\infty}^\infty  q_\sigma(k) |a_k| < \infty,
 \end{equation}
then we likewise have
\begin{equation}
	\lim_{\sigma \to \infty} \sum_{k = -\infty}^\infty q_\sigma(k) a_k = \alpha.
\end{equation}
\label{lem:cesaro}
\end{lem}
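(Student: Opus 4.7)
The plan is to recognize this as a Tauberian/Abelian statement relating the symmetric Cesàro averages $S_m/(2m+1)$ (with $S_m = \sum_{k=-m}^m a_k$) to the $q_\sigma$-weighted averages $\sum_k q_\sigma(k) a_k$, and to prove it via Abel summation by parts. The strategy is to rewrite the weighted sum as a new average of the partial sums $S_k$ with nonnegative weights $\Delta_k = q_\sigma(k) - q_\sigma(k+1)$ that themselves form a probability distribution (after multiplying by $2k+1$), then exploit the known Cesàro limit.

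First I would use the symmetry \ref{TSG:symm} to rewrite the series as
\begin{equation*}
\sum_{k=-\infty}^\infty q_\sigma(k)\, a_k \;=\; q_\sigma(0)\, a_0 + \sum_{k=1}^\infty q_\sigma(k)(S_k - S_{k-1}),
\end{equation*}
which is legal since (\ref{eq:cesaro:summable}) gives absolute convergence. A standard Abel summation by parts then yields
\begin{equation*}
\sum_{k=-\infty}^\infty q_\sigma(k)\, a_k \;=\; \sum_{k=0}^\infty \Delta_k\, S_k,
\end{equation*}
provided the boundary term $q_\sigma(N) S_N$ tends to zero. This is where \ref{TSG:mono} pays off: the hypothesis $q_\sigma\in\ell^1$ combined with monotone-decreasing $q_\sigma(|k|)$ forces $N\, q_\sigma(N)\to 0$, while $S_N = O(N)$ by the Cesàro hypothesis, so the boundary vanishes. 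By \ref{TSG:mono} the weights $\Delta_k\ge 0$, and a quick telescoping argument (again using symmetry and that $q_\sigma$ is a PMF) shows the key normalization
\begin{equation*}
\sum_{k=0}^\infty (2k+1)\, \Delta_k \;=\; \sum_{k=-\infty}^\infty q_\sigma(k) \;=\; 1.
\end{equation*}

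Next I would decompose $S_k = (2k+1)\alpha + \epsilon_k$, where $\epsilon_k/(2k+1)\to 0$ by hypothesis. The normalization above gives
\begin{equation*}
\sum_{k=-\infty}^\infty q_\sigma(k)\, a_k - \alpha \;=\; \sum_{k=0}^\infty \Delta_k\, \epsilon_k,
\end{equation*}
so it suffices to show this error vanishes as $\sigma\to\infty$. Given $\delta>0$, choose $K$ so that $|\epsilon_k|\le \delta(2k+1)$ for all $k\ge K$. The tail is then bounded by $\delta \sum_{k\ge K}(2k+1)\Delta_k \le \delta$. The head $\sum_{k<K}\Delta_k |\epsilon_k|$ is at most $\bigl(\max_{k<K}|\epsilon_k|\bigr)\cdot\bigl(q_\sigma(0)-q_\sigma(K)\bigr)$, which tends to $0$ as $\sigma\to\infty$ by \ref{TSG:unif} applied to the finitely many indices $0,\dots,K$. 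Since $\delta$ was arbitrary, the result follows.

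The main obstacle I anticipate is the justification of the Abel rearrangement, specifically the vanishing of the boundary term $q_\sigma(N)S_N$ as $N\to\infty$: this is where one must combine three pieces---absolute summability of the weighted series, the linear growth $|S_N| = O(N)$ coming from the Cesàro hypothesis, and the little-$o$ decay $N q_\sigma(N)\to 0$ extracted from monotonicity plus summability of $q_\sigma$. Everything else is a routine split-into-head-and-tail argument once the weights $(2k+1)\Delta_k$ are recognized as a probability distribution on the averaging scales.
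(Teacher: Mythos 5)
Your proof is correct and takes essentially the same route as the paper: an Abel summation reexpresses $\sum_k q_\sigma(k)a_k$ as a probability mixture $\sum_m r_\sigma(m) b_m$, with $r_\sigma(m) = (2m+1)(q_\sigma(m)-q_\sigma(m+1))$ nonnegative and summing to $1$ and $b_m$ the symmetric Ces\`aro averages (your $\Delta_k, S_k, \epsilon_k$ decomposition is the same thing in different coordinates), followed by a head/tail split that kills the head via \ref{TSG:unif}. The one technical divergence is your justification of the rearrangement: you control the boundary term $q_\sigma(N)S_N$ directly, using $Nq_\sigma(N)\to 0$ (from summability plus monotonicity) together with $S_N = O(N)$, whereas the paper avoids isolating a boundary term by applying dominated convergence to $\sum_{|\ell|<k}\bigl(q_\sigma(\ell)-q_\sigma(k)\bigr)a_\ell$; both are valid.
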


\begin{proof}[Proof of Lemma~\ref{lem:cesaro}]
We can write
\begin{align}
   \label{eq:cesaro:dom}
    &\phantom{=}\sum_{\ell = -\infty}^\infty q_\sigma(\ell) a_\ell \nonumber \\
	&= \lim_{k \to \infty} \sum_{\ell = -(k-1)}^{k-1} \left(q_\sigma(\ell) - q_\sigma(k)\right) a_\ell \\
	&= \lim_{k \to \infty} \sum_{\ell = -(k-1)}^{k-1} \sum_{m = \abs{\ell}}^{k-1} \left(q_\sigma(m) - q_\sigma(m+1)\right) a_\ell \\
	&= \lim_{k \to \infty} \sum_{m = 0}^{k-1} \sum_{\ell = -m}^m \left(q_\sigma(m) - q_\sigma(m+1)\right) a_\ell \\
	\label{eq:cesaro:ID}
	&= \sum_{m = 0}^{\infty} \sum_{\ell = -m}^m \left(q_\sigma(m) - q_\sigma(m+1)\right) a_\ell,
\end{align}
where (\ref{eq:cesaro:dom}) holds by (\ref{eq:cesaro:summable}), \ref{TSG:symm}, \ref{TSG:mono}, and dominated convergence. For $m \ge 0$, define the sequences
\begin{align}
	b_m & = \frac{1}{2m+1}\sum_{\ell = -m}^m a_\ell
\intertext{and}
	r_\sigma(m) & = \left(q_\sigma(m) - q_\sigma(m+1)\right)(2m+1).
\end{align}
By \ref{TSG:mono}, $r_\sigma(m) \ge 0$. By (\ref{eq:cesaro:ID}),
\begin{equation}
	\sum_{\ell = -\infty}^\infty q_\sigma(\ell) a_\ell = \sum_{m = 0}^\infty r_\sigma(m) b_m.
\end{equation}
Now the choice $a_\ell = 1$ satisfies (\ref{eq:cesaro:summable}) and in this case the previous equation reads $\sum_{m = 0}^\infty r_\sigma(m) = 1$. Fix $\epsilon > 0$ and $M$ such that for all $m > M$, $\abs{b_m - \alpha} < \epsilon$. We can write
\begin{align}
	&\phantom{\leq} \abs{\sum_{\ell = -\infty}^\infty q_\sigma(\ell) a_\ell - \alpha} \nonumber \\
	&\leq \abs{\sum_{m = 0}^M r_\sigma(m) (b_m - \alpha)} \nonumber \\
	&\phantom{\leq} + \abs{\sum_{m = M+1}^\infty r_\sigma(m) (b_m - \alpha)}\\
	&\leq \left(\sum_{m = 0}^M r_\sigma(m)\right)\left(\max_{m = 0,1,\ldots,M} b_m + \abs{\alpha}\right) + \epsilon.
\end{align}
Taking $\sigma \to \infty$ on both sides, the conclusion follows by \ref{TSG:unif}.
\end{proof}

We now prove Theorem~\ref{thm:realism}.

\begin{proof}[Proof of Theorem~\ref{thm:realism}]
With a slight abuse of notation, let $F_\sigma$ denote the CDF of the distribution
\begin{equation}
    \sum_{k = -\infty}^\infty q_\sigma(k) \delta_{z_k},
\end{equation}
and define $\hat{F}_\sigma$ analogously. Then $D_{0,\sigma} = W_p^p(F_\sigma,\hat{F}_\sigma)$. By the triangle inequality for Wasserstein distance~\cite[p.~94]{villani2009optimal} (which requires $d$ to be a metric),
\begin{equation}
W_p(F_\sigma,\hat{F}_\sigma) \le W_p(F_\sigma,F) + W_p(F,\hat{F}) + W_p(\hat{F},\hat{F}_\sigma).
\end{equation}
By Lemma~\ref{infinitylemma} and (\ref{eq:realism:CDF}), $F_\sigma \stackrel{w}{\rightarrow} F$. By Lemma~\ref{infinitylemma}, (\ref{eq:realism:moment}), and (\ref{eq:realism:summable}), we have
\begin{equation}
 %   \lim_{\sigma \rightarrow \infty} \sum_{n = -\infty}^\infty q_\sigma(n) d^p(z_n,0)
 \lim_{\sigma \rightarrow \infty} 
         \int d^p(z,0) dF_\sigma(z)
         = \int d^p(z,0) dF(z).
\end{equation}
These two conditions imply that $W_p(F_\sigma,F) \rightarrow 0$ as $\sigma \rightarrow \infty$ \cite[Thm.~6.9]{villani2009optimal}. Similarly we have $W_p(\hat{F},\hat{F}_\sigma) \rightarrow 0$, yielding 
\begin{equation}
\limsup_{\sigma \rightarrow \infty} D_{0,\sigma} \le W_p^p(F,\hat{F}).
\end{equation}
Applying the triangle inequality in the reverse direction gives
\begin{equation}
    W_p(F,\hat{F}) \le W_p(F,F_\sigma) + W_p(F_\sigma,\hat{F}_\sigma) + W_p(\hat{F}_\sigma,\hat{F}).
\end{equation}
Taking limits yields 
\begin{equation}
\liminf_{\sigma \rightarrow \infty} D_{0,\sigma} \ge W_p^p(F,\hat{F})
\end{equation}
and the theorem.
\end{proof}

\begin{proof}[Proof of Corollary~\ref{cor:ergodic}]
Among the hypotheses of Theorem~\ref{thm:realism}, (\ref{eq:realism:CDF}) and (\ref{eq:realism:moment}) hold a.s.\ by the ergodic theorem (e.g.,~\cite[Thm 6.2.1]{durrett1996pte}) and (\ref{eq:finitepthmoment}), and (\ref{eq:realism:summable}) holds a.s.\ because
\begin{align}
   \E{\sum_{k = -\infty}^\infty q_\sigma(k) d^p(Z_k,0)}
     &= \sum_{k = -\infty}^\infty q_\sigma(k) \E{d^p(Z_k,0)} \nonumber \\
	& = \E{d^p(Z_0,0)} < \infty,
\end{align}
by monotone convergence and (\ref{eq:finitepthmoment}), and similarly for $\hat{Z}$.
%multivariate Glivenko-Cantelli
%theorem~\cite{wright1981empirical} 
%and (\ref{eq:averagemoment}) holds a.s.\ by (\ref{eq:finitepthmoment}) and similarly for $\hat{Z}$.
\end{proof}

\end{document}